\newcommand{\lv}[1]{#1}
\newcommand{\sv}[1]{}
\renewcommand{\sp}{{\hspace*{1 mm}}}
\newcommand{\pr}{\mathbf{Pr}}
\newcommand{\eps}{{\varepsilon}}
\newcommand{\poly}{\mathrm{poly}}
\newcommand{\nl}{\bot}
\newcommand{\veps}{\varepsilon}
\renewcommand{\poly}{{\mbox {\small poly}}}
\newcommand{\ost}[1]{O^{\star}_{#1}}
\newcommand{\cost}{{\tt cost}}
\newcommand{\opt}{{\tt opt}}
\newcommand{\I}{{\cal I}}
\newcommand{\T}{{\cal T}}
\renewcommand{\L}{{\cal L}}
\newcommand{\ti}{{\bar i}}
\newcommand{\ostp}[1]{{O_{#1}^{'}}}
\newcommand{\ostn}[1]{{O_{#1}^{n}}}
\newcommand{\ostpn}[1]{{c(O_{#1}^{n})}}
\newcommand{\ostf}[1]{{O_{#1}^{f}}}
\newcommand{\core}{{\tt core}}
\newcommand{\const}{{\tt construct}}
\newcommand{\bO}{{\mathbb O}}
\newcommand{\bOst}{{\mathbb O^{\star}}}
\newcommand{\ctd}{{\mathbb C}}
\newcommand{\actd}{{A^{\ctd}}}
\spnewtheorem{fact}{Fact}{\bfseries}{\itshape}
\begin{document}

\title{Faster Algorithms for the Constrained $k$-means Problem}
%
%
\author{Anup Bhattacharya \and Ragesh Jaiswal \and Amit Kumar}
%
%
%
\institute{Department of Computer Science and Engineering, \\
Indian Institute of Technology Delhi.\thanks{Email addresses: \email{\{anupb, rjaiswal, amitk\}@cse.iitd.ac.in}}
}
\sv{\thispagestyle{empty}}
\maketitle     
\begin{abstract}
The classical center based clustering problems such as $k$-means/median/center assume that the optimal clusters satisfy the locality property that the points in the same cluster are close to each other.
A number of clustering problems arise in machine learning where the optimal clusters do not follow such a locality property.
For instance, consider the {\em $r$-gather clustering problem} where there is an additional constraint that each of the clusters should have at least $r$ points or the {\em capacitated clustering problem} where there is an upper bound on the cluster sizes.
Consider a variant of the $k$-means problem that may be regarded as a general version of such problems.
Here, the optimal clusters $O_1, ..., O_k$ are an {\em arbitrary} partition of the dataset and the goal is to output $k$-centers $c_1, ..., c_k$ such that the objective function $\sum_{i=1}^{k} \sum_{x \in O_{i}} ||x - c_{i}||^2$ is minimized.
It is not difficult to argue that any algorithm (without knowing the optimal clusters) that outputs a single set of $k$ centers, will not behave well as far as optimizing the above objective function is concerned.
However, this does not rule out the existence of algorithms that output a {\em list} of such $k$ centers such that at least one of these $k$ centers behaves well.
Given an error parameter $\veps > 0$, let $\ell$ denote the size of the smallest list of $k$-centers such that at least one of the $k$-centers gives a $(1+\veps)$ approximation w.r.t. the objective function above.
In this paper, we show an upper bound on $\ell$ by giving a randomized algorithm that outputs a list of $2^{\tilde{O}(k/\veps)}$ $k$-centers
\footnote{$\tilde{O}$ notation hides a $O(\log{\frac{k}{\eps}})$ factor.}.
We also give a closely matching lower bound of $2^{\tilde{\Omega}(k/\sqrt{\veps})}$.
Moreover, our algorithm runs in time $O \left(n d \cdot 2^{\tilde{O}(k/\veps)} \right)$.
This is a significant improvement over the previous result of Ding and Xu~\cite{dx15} who gave an algorithm with running time $O \left(n d \cdot (\log{n})^{k} \cdot 2^{poly(k/\veps)} \right)$ and output a list of size $O \left((\log{n})^k \cdot 2^{poly(k/\veps)} \right)$.
Our techniques generalize for the $k$-median problem and for many other settings where non-Euclidean distance measures are involved.

\end{abstract}
\section{Introduction}
Clustering problems intend to classify high dimensional data based on the proximity of points to each other.
There is an inherent assumption that the clusters satisfy  {\em locality} property -- points close to each other (in a geometric sense) should belong to the same category.
Often, we model such problems by the notion of a center based clustering problem.
We would like to identify a set of centers, one for each cluster, and then the clustering is obtained by assigning each point to the nearest center.
For example, the $k$-means problem is defined in the following manner: given a dataset $X = \{x_1, \ldots, x_n\} \subset \mathbb{R}^d$ and an integer $k$, output a set of $k$ centers $\{c_1, \ldots, c_k\} \subset \mathbb{R}^d$ such that the objective function $\sum_{x \in X} \min_{c \in \{c_1, \ldots, c_k\}} ||x - c||^2$ is minimized.
The $k$-median and the $k$-center problems are defined in a similar manner by defining a suitable objective function.

However, often such clustering problems entail several {\em side constraints}.
Such constraints limit the set of feasible clusterings.
For example, the $r$-gather $k$-means clustering problem is defined in the same manner as the $k$-means problem, but has the additional constraint that each cluster must have at least $r$ points in it.
In such settings, it is no longer true that the clustering is obtained from the set of centers by the Voronoi partition.
Ding and Xu~\cite{dx15} began a systematic study of such problems, and this is the
starting point of our work as well.
They defined the so-called {\em constrained $k$-means} problem.
An instance of such  a problem is specified by a set of points $X$, a parameter $k$, and a set $\ctd$, where each element of $\ctd$ is a partitioning of $X$ into $k$ disjoint subsets (or clusters).
Since the set $\ctd$ may be exponentially large, we will assume that it is specified in a
succinct manner by an efficient algorithm which decides membership in this set.
A solution needs to output an element $\bO = \{O_1, \ldots, O_k\}$ of $\ctd$, and a set of $k$ centers, $c_1, \ldots, c_k$, one for each cluster in $\bO$.
The goal is to minimize $\sum_{i=1}^k \sum_{x \in O_i} ||x-c_i||^2$. It is easy to check that the center $c_i$ must be the mean of the corresponding cluster $O_i$.
Note that the $k$-means problem is a special case of this problem where the set $\ctd$ contains all possible ways of partitioning $X$ into $k$ subsets.
The constrained $k$-median problem can be defined similarly.
We will make the natural assumption (which is made by Ding and Xu as well) that it suffices to find a set of $k$ centers.
In other words, there is an (efficient) algorithm $\actd$, which given a set of $k$ centers $c_1, \ldots, c_k$, outputs the clustering $\{O_1, \ldots, O_k\} \in \ctd$ such that $\sum_{i=1}^k \sum_{x \in O_i} ||c_i - x||^2$ is minimized.
Such an algorithm is called a {\em partition algorithm} by Ding and Xu~\cite{dx15}
\footnote{\cite{dx15} also gave a discussion on such partition algorithms for a number of clustering problems with side constraints.}.
For the case of the $k$-means problem, this algorithm will just give the Voronoi partition with respect to $c_1, \ldots, c_k$, whereas in the case of the $r$-gather $k$-means clustering problem, the algorithm $\actd$ will be given by a suitable min-cost flow computation (see section 4.1 in \cite{dx15}).

Ding and Xu~\cite{dx15} considered several natural problems arising in diverse areas, e.g. machine learning, which can be stated in this framework.
These included the so-called $r$-gather $k$-means, $r$-capacity $k$-means and $l$-diversity $k$-means problems. Their approach for solving
such problems was to output a list of candidate sets of centers (of size $k$) such that at least one of these were close to the optimal centers. We
formalize this approach and show that if $k$ is a constant, then one can obtain a PTAS for the constrained $k$-means (and the constrained $k$-median)
problems whose running time is linear plus a constant number of calls to $\actd$.

We define the {\em list $k$-means} problem.
Given a set of points $X$ and  parameters $k$ and $\eps$, we want to output a list $\L$ of sets of $k$ points (or centers).
The list $\L$ should have the following property: for {\em any} partitioning $\bO=\{O_1, \ldots, O_k\}$ of $X$ into $k$ clusters, there exists a set $c_1, \ldots, c_k$ in the list $\L$ such that  (up-to reordering of these centers)
\begin{eqnarray}
\label{eq:list}
\sum_{i=1}^k \sum_{x \in O_i} ||c_i - x||^2 \leq (1+\eps)\sum_{i=1}^k \sum_{x \in O_i} ||x - m_i||^2,
 \end{eqnarray}
 where $m_i = \frac{\sum_{x \in O_i} x}{|O_i|}$ denotes the mean of $O_i$.
Note that the latter quantity is the $k$-means cost of the clustering $\bO$, and so we require $c_1, \ldots, c_k$ to be such that the
cost of assigning to these centers is close to the optimal $k$-means cost of this clustering.
We shall use $\opt_k(\bO)$ to denote the optimal $k$-means cost of $\bO$.

Although such an oblivious approach to clustering may appear too optimistic, we show that it is possible to obtain such a list $\L$ of size $2^{\tilde{O}(k/\eps)}$ in $O \left(nd \cdot 2^{\tilde{O}(k/\eps)} \right)$ time.
This improves the result of Ding and Xu~\cite{dx15}, where they gave an algorithm which outputs a list of size $O \left((\log{n})^k \cdot 2^{\poly(k/\eps)}\right)$.
Observe that we address a question which is both algorithmic and existential~: how small can the size of $\L$ be, and how efficiently can we find it~?
We also give almost matching lower bounds on the size of such a list $\L$.
Our algorithm for finding $\L$ relies on the $D^2$-sampling idea --
iteratively find the centers by picking the next one to be {\em far} from the current set of centers.
Although these ideas have been used for the $k$-means problems (see e.g.~\cite{jks}), they rely heavily on the fact that given a set of centers, the corresponding clustering is obtained by the corresponding Voronoi partition.
Our approach relies in showing that there is small sized list $\L$ which works well for all possible clusterings.

It is not hard to show that a result for the list $k$-means problem implies a corresponding result for the constrained $k$-means problem with
the number of calls to $\actd$ being equal to the size of the list $\L$. Therefore, we obtain as corollary of our main result efficient algorithms
for the constrained $k$-means (and the constrained $k$-median) problems.

\subsection{Related work}

The classical $k$-means problem is one of the most well-studied clustering problems. There is a long sequence of work on obtaining
fast PTAS for the $k$-means and the $k$-median problems (see e.g., ~\cite{Matousek00,BadoiuHI02,VegaKKR03,Har-PeledM04,kss,abs10,Chen06,jks,FeldmanMS07}  and references therein). Some of these works implicitly maintain
 a list of centers of size $k$ such that the condition~(\ref{eq:list}) is satisfied for all clusterings $\bO$ which correspond to a Voronoi partition (with respect to a set of $k$ centers) of the input set of points, and one picks the best possible set of centers from this list (see e.g.,~\cite{kss,abs10,jks}).
 The list has at most $2^{\poly(k/\veps)}$ elements, and from this, one can recover a
 $(1 + \veps)$-approximation algorithm for the $k$-means problem with running time $O \left( n d \cdot 2^{\poly(k/\veps)}\right)$.

%
%
%
%

The more general case of the constrained $k$-means problem was studied by Ding and Xu~\cite{dx15} who also gave an algorithm that outputs a list of size $O \left( (\log{n})^{k} \cdot 2^{\poly(k/\veps)}\right)$. Our work improves upon this result.
%
Moreover, we consider the formulation of the list $k$-means problem as an important contribution, and feel that similar formulations in
other classification settings would be useful.

%

\subsection{Preliminaries}
We  formally define the problems considered in this paper.
The centroid or mean of a finite set of points $X \subset \mathbb{R}^d$ is denoted by $\Gamma(X) = \frac{\sum_{x \in X} x }{|X|}$.
Let $\Delta(X)$ denote the $1$-means cost of these set of points, i.e., $\sum_{x \in X} ||x-\Gamma(X)||^2$.

An input instance $\I$ for the list $k$-means (or the list $k$-median)  problem
consists of a set of points $X$, a positive integer  $k$ and a positive parameter $\eps$.
A partition of $X$ into disjoint subsets $O_1, \ldots, O_k$ will be called a {\em clustering} of $X$.
Given a clustering $\bOst = \{O_1^\star, \ldots, O_k^\star\}$ of $X$ and a set of $k$ centers $C=\{c_1, \ldots, c_k\}$, define $\cost_C(\bO^\star)$ as the minimum, over all permutations $\pi$ of $C$, of $\sum_{i=1}^k \sum_{x \in O_i^\star} ||x-c_{\pi(i)}||^2$.
Recall that $\opt_k(\bOst)$ denotes the optimal $k$-means cost of $\bOst$, i.e., $\sum_{i=1}^{k} \sum_{x \in O_i^\star}
||x - \Gamma(O_i^\star)||^2.$

For a set of points $X$ and a set of points $C$ (of size at most $k$), define $\Phi_{C}(X)$ as $\sum_{x \in X} \min_{c \in C} ||x - c||^2$, i.e., we consider the Voronoi partition of $X$ induced by $C$, and consider the $k$-means cost of $X$ with respect to this partition. When considering the list $k$-median problem, we will use the same notation, except that we will consider the Euclidean norm instead of the square of the Euclidean norm.
When $C$ is a singleton set $\{c\}$, we shall abuse notation by using $\Phi_c(X)$ instead of $\Phi_{\{c\}}(X)$.

As mentioned in the introduction, the constrained $k$-means problem is specified by a set of points $X$, a positive integer $k$, and
a set $\ctd$ of feasible clusterings of $X$. Further, we are given an algorithm $\actd$, which given a set of $k$ centers $C$, outputs
the clustering $\bO$ in $\ctd$ which minimizes $\cost_C(\bO)$. The goal is to find a clustering $\bO \in \ctd$ and a set $C$ of size $k$
which minimizes $\cost_C(\bO)$. Note that the centers in $C$ should just be the mean of each cluster in $\bO$. On the other hand, if we know $C$, then
we can find the best clustering in $\ctd$ by calling $\actd$. We use the same notation for the constrained $k$-median problem.

We now mention a few results which will be used in our analysis.
The following fact is well known.
\begin{fact}\label{lem:folklore}
For any $X \subset \mathbb{R}^d$ and $c \in \mathbb{R}^d$ we have
$\sum_{x \in X} ||x - c||^2 = \sum_{x \in X} ||x - \Gamma(X)||^2 + |X| \cdot ||c - \Gamma(X)||^2$.
\end{fact}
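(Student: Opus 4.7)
The plan is to reduce the identity to the observation that the centroid is the unique point at which the first-order condition $\sum_{x\in X}(x-\Gamma(X))=0$ holds, and then to do a straightforward expansion of the squared norm.

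First I would rewrite each summand by adding and subtracting $\Gamma(X)$:
\[
\|x-c\|^2 \;=\; \|(x-\Gamma(X)) + (\Gamma(X)-c)\|^2.
\]
Next I would apply the standard inner-product expansion $\|a+b\|^2 = \|a\|^2 + 2\langle a,b\rangle + \|b\|^2$ to get
\[
\|x-c\|^2 \;=\; \|x-\Gamma(X)\|^2 \;+\; 2\,\langle x-\Gamma(X),\,\Gamma(X)-c\rangle \;+\; \|\Gamma(X)-c\|^2.
\]

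Now I would sum this identity over all $x \in X$. The first term on the right directly gives $\sum_{x\in X}\|x-\Gamma(X)\|^2$, and the third term is independent of $x$, so summing contributes $|X|\cdot\|c-\Gamma(X)\|^2$. The only thing left to handle is the cross term: pulling the fixed vector $\Gamma(X)-c$ out of the inner product, it becomes
\[
2\,\Bigl\langle \sum_{x\in X}\bigl(x-\Gamma(X)\bigr),\;\Gamma(X)-c\Bigr\rangle,
\]
and by the definition $\Gamma(X) = \frac{1}{|X|}\sum_{x\in X} x$, the inner sum is exactly zero. Thus the cross term vanishes, and combining the remaining contributions yields the claimed identity.

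The only conceivable obstacle is the vanishing of the cross term, but this is immediate from the definition of the centroid; no estimation or deeper argument is needed. Everything else is a one-line bilinear expansion of a squared Euclidean norm, so the proof is essentially a three-line calculation once the ``add and subtract $\Gamma(X)$'' trick is in place.
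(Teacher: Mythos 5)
Your proof is correct and is exactly the standard argument intended here: the paper states this as a well-known fact without proof, and the ``add and subtract $\Gamma(X)$, expand, and kill the cross term via $\sum_{x\in X}(x-\Gamma(X))=0$'' calculation is the canonical derivation. Nothing is missing.
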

\noindent
We next define the notion of $D^2$-sampling.
\begin{definition}[$D^2$-sampling]
Given a set of points $X \subset \mathbb{R}^d$ and another set of points $C \subset \mathbb{R}^d$,  $D^2$-sampling from $X$ w.r.t. $C$
samples a point $x \in X$ with probability $\frac{\Phi_{C}(\{x\})}{\Phi_{C}(X)}$.
\end{definition}
\noindent
The following result of Inaba et al.~\cite{inaba} shows that a constant size random sample is a good enough approximation of a set of points
$X$ as far as the 1-means objective is concerned.
\begin{lemma}[\cite{inaba}]\label{lemma:inaba}
Let $S$ be a set of points obtained by independently  sampling $M$ points with replacement uniformly at random from a point set $X \subset \mathbb{R}^d$.
Then for any $\delta > 0$,
\[
\pr\left[\Phi_{\Gamma(S)}(X) \leq \left( 1 + \frac{1}{\delta M}\right) \cdot \Delta(X) \right] \geq (1- \delta).
\]
\end{lemma}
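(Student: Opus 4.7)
The plan is to reduce the statement to Markov's inequality applied to the variance of the sample mean. First, I would invoke Fact \ref{lem:folklore} with $c = \Gamma(S)$ to rewrite the quantity of interest in a more tractable form:
\[
\Phi_{\Gamma(S)}(X) \;=\; \sum_{x \in X} \|x - \Gamma(S)\|^2 \;=\; \Delta(X) + |X| \cdot \|\Gamma(S) - \Gamma(X)\|^2.
\]
So it suffices to show that with probability at least $1 - \delta$,
\[
|X| \cdot \|\Gamma(S) - \Gamma(X)\|^2 \;\leq\; \frac{1}{\delta M} \cdot \Delta(X).
\]

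Next, I would compute the expectation of $\|\Gamma(S) - \Gamma(X)\|^2$ over the random sample $S = \{s_1, \ldots, s_M\}$ drawn i.i.d. uniformly from $X$. Writing $\Gamma(S) = \frac{1}{M}\sum_{j=1}^{M} s_j$ and using linearity together with $\E[s_j] = \Gamma(X)$, we see that $\Gamma(S)$ is an unbiased estimator of $\Gamma(X)$. Since the $s_j$ are independent, the coordinatewise variances add, giving
\[
\E\bigl[\|\Gamma(S) - \Gamma(X)\|^2\bigr] \;=\; \frac{1}{M^2} \sum_{j=1}^{M} \E\bigl[\|s_j - \Gamma(X)\|^2\bigr] \;=\; \frac{1}{M} \cdot \frac{\Delta(X)}{|X|},
\]
where the last equality uses that $s_j$ is uniform on $X$, so $\E[\|s_j - \Gamma(X)\|^2] = \Delta(X)/|X|$. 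Multiplying by $|X|$ yields $\E\bigl[|X| \cdot \|\Gamma(S) - \Gamma(X)\|^2\bigr] = \Delta(X)/M$.

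Finally, I would apply Markov's inequality to the nonnegative random variable $|X| \cdot \|\Gamma(S) - \Gamma(X)\|^2$:
\[
\pr\!\left[ |X| \cdot \|\Gamma(S) - \Gamma(X)\|^2 \;>\; \frac{\Delta(X)}{\delta M}\right] \;<\; \delta.
\]
Combining this with the decomposition from Fact \ref{lem:folklore} gives $\Phi_{\Gamma(S)}(X) \leq (1 + \frac{1}{\delta M})\,\Delta(X)$ with probability at least $1-\delta$, which is the desired statement.

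I do not expect a serious obstacle here; the only subtlety is justifying the variance computation, which hinges on the sampling being independent and with replacement (so that the cross terms in the expansion of $\|\sum_j (s_j - \Gamma(X))\|^2$ vanish in expectation). Once that is observed, the rest is Fact \ref{lem:folklore} plus a one-line Markov bound.
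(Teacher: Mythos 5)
Your proof is correct. The paper does not prove this lemma at all --- it is imported verbatim from Inaba et al.~\cite{inaba} --- but your argument (Fact~\ref{lem:folklore} to isolate the term $|X|\cdot\|\Gamma(S)-\Gamma(X)\|^2$, the variance computation $\E[\|\Gamma(S)-\Gamma(X)\|^2]=\Delta(X)/(M|X|)$ using independence to kill the cross terms, then Markov) is exactly the standard proof of that result, and every step checks out.
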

\noindent
We will also use the following simple fact that may be interpreted as approximate version of the triangle inequality for squared Euclidean distance.

\begin{fact}[Approximate triangle inequality]\label{lem:triangle}
For any $x, y, z \in \mathbb{R}^d$, we have $||x - z||^2 \leq 2 \cdot ||x - y||^2 + 2 \cdot ||y - z||^2$.
\end{fact}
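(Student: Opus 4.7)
The plan is to reduce this to a purely scalar inequality by first applying the ordinary (un-squared) triangle inequality in $\mathbb{R}^d$ and then squaring. Concretely, I would write $x-z = (x-y) + (y-z)$, so that by the standard triangle inequality for the Euclidean norm,
\[
\|x-z\| \;\le\; \|x-y\| + \|y-z\|.
\]
Squaring both sides (both are nonnegative) gives $\|x-z\|^2 \le \bigl(\|x-y\| + \|y-z\|\bigr)^2$, and the proof is then finished once I verify the scalar bound $(a+b)^2 \le 2a^2 + 2b^2$ for all $a,b \ge 0$.

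That scalar bound is immediate from $(a-b)^2 \ge 0$, which rearranges to $2ab \le a^2 + b^2$, hence $(a+b)^2 = a^2 + 2ab + b^2 \le 2a^2 + 2b^2$. Setting $a = \|x-y\|$ and $b = \|y-z\|$ and chaining the two inequalities yields exactly the claim. There is no real obstacle here; the only thing to be careful about is that the squaring step in the first display is valid precisely because the norms are nonnegative, so no extra case analysis is needed.

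An equally short alternative I would mention is a direct inner-product expansion: $\|x-z\|^2 = \|x-y\|^2 + 2\langle x-y,\, y-z\rangle + \|y-z\|^2$, and then Cauchy–Schwarz together with $2uv \le u^2 + v^2$ (applied to $u=\|x-y\|$, $v=\|y-z\|$) bounds the cross term by $\|x-y\|^2 + \|y-z\|^2$, producing the same inequality. Either route works, and since the statement is a folklore inequality used only as a convenient tool later in the paper, I would pick whichever fits the exposition best, most likely the triangle-inequality-plus-scalar version above for its brevity.
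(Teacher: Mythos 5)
Your proof is correct; the paper states this as a folklore fact and gives no proof at all, and your argument (triangle inequality followed by the scalar bound $(a+b)^2 \le 2a^2+2b^2$ from $(a-b)^2 \ge 0$) is exactly the standard justification the paper implicitly relies on. Either of your two routes is fine.
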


\subsection{Our results}
We now state our results for the list $k$-means and the list $k$-median problems.

\begin{theorem} \label{thm:upperbound}
Given a set of $n$ points $X \subset \mathbb{R}^d$, parameters $k$ and $\eps$, there is a randomized algorithm which outputs a list  $\L$ of $2^{\tilde{O}(k/\eps)}$ sets of centers of size $k$ such that for any clustering $\bOst = \{O_1^\star, ..., O_k^\star\}$ of $X$, the following event happens with probability at least $1/2$~: there is a set $C \in \L$ such that
$$
\cost_C(\bOst) \leq (1+\eps) \cdot \opt_k(\bOst).$$
Moreover, the running time of our algorithm is $O \left(n d \cdot 2^{\tilde{O}(k/\eps)} \right)$.
The same statement holds for the list $k$-median problem as well, except that the size of the list $\L$ becomes $2^{\tilde{O}(k/\eps^{O(1)})}$
and the running time of our algorithm becomes $O \left(n d \cdot 2^{\tilde{O}(k/\eps^{O(1)})} \right)$.
\end{theorem}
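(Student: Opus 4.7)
The plan is to grow a search tree of depth $k$ rooted at $C_0 = \emptyset$. At a node of depth $i-1$ labelled $C_{i-1}$, I would draw a multiset $S$ of $N = \Theta\bigl((k/\eps^2)\log(k/\eps)\bigr)$ points from $X$ by $D^2$-sampling with respect to $C_{i-1}$ (uniform sampling when $i=1$), and for every subset $T \subseteq S$ of size $M = \Theta(1/\eps)$ attach a child labelled $C_{i-1}\cup\{\Gamma(T)\}$. In addition, at every internal node I would emit ``padded'' size-$k$ candidates of the form $C_{i-1}$ extended by points from a suitable discretization of $\mathrm{conv}(C_{i-1})$. The list $\L$ collects all such size-$k$ labels. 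Since each node has $\binom{N}{M} = 2^{\tilde O(1/\eps)}$ ordinary children together with a subdominant number of padded ones, $|\L| = 2^{\tilde O(k/\eps)}$; each $D^2$-sampling step costs $O(nd)$, giving total running time $O\bigl(nd \cdot 2^{\tilde O(k/\eps)}\bigr)$.

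\paragraph{Inductive correctness.}
Fix any target clustering $\bOst$. I would prove by induction on $i$ that with probability at least $1 - i/(2k)$, some depth-$i$ node in the tree carries a label $C_i = \{c_1,\ldots,c_i\}$ and distinct indices $j_1,\ldots,j_i$ satisfying $\Phi_{c_l}(O_{j_l}^\star) \leq \Delta(O_{j_l}^\star) + (\eps/(2k))\,\opt_k(\bOst)$ for every $l \leq i$; summing at $i = k$ and choosing the matching $c_l \leftrightarrow O_{j_l}^\star$ yields $\cost_{C_k}(\bOst) \leq (1+\eps)\opt_k(\bOst)$. The inductive step splits on the residual $D^2$-mass $U = \sum_{j \notin \{j_1,\ldots,j_{i-1}\}} \Phi_{C_{i-1}}(O_j^\star)$. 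When $U > (\eps/2)\opt_k(\bOst)$, the maximum-mass uncovered cluster $O_{j_i}^\star$ has $D^2$-mass at least $U/k$; combined with the inductive bound $\Phi_{C_{i-1}}(X) \leq (2+\eps)\opt_k(\bOst)$ this makes each $D^2$-sample land in $O_{j_i}^\star$ with probability $\Omega(\eps/k)$, and the choice of $N$ plus a Chernoff bound guarantees at least $M$ samples in $O_{j_i}^\star$ with probability $\geq 1 - 1/(4k)$. When instead $U \leq (\eps/2)\opt_k(\bOst)$, a Jensen argument shows that for every uncovered $O_j^\star$ the average $\bar h_j$ of $\{h(x) : x \in O_j^\star\}$ (where $h(x) \in C_{i-1}$ denotes the center nearest to $x$) satisfies $|O_j^\star|\cdot\|\bar h_j - \Gamma(O_j^\star)\|^2 \leq \Phi_{C_{i-1}}(O_j^\star)$, so an appropriate padded candidate using a discretization of $\bar h_j$ as the $j$-th center already witnesses the invariant.

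\paragraph{Main obstacle.}
The delicate step is to show that, conditional on at least $M$ $D^2$-samples falling in $O_{j_i}^\star$, at least one size-$M$ subset $T \subseteq S$ satisfies $|O_{j_i}^\star|\cdot\|\Gamma(T) - \Gamma(O_{j_i}^\star)\|^2 \leq (\eps/(2k))\opt_k(\bOst)$, which by Fact~\ref{lem:folklore} upgrades to $\Phi_{\Gamma(T)}(O_{j_i}^\star) \leq \Delta(O_{j_i}^\star) + (\eps/(2k))\opt_k(\bOst)$. For \emph{uniform} samples drawn from $O_{j_i}^\star$, Lemma~\ref{lemma:inaba} delivers exactly this with $M = \Theta(k/\eps)$; the difficulty is that inside $O_{j_i}^\star$ the $D^2$-distribution is biased toward points far from $C_{i-1}$. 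I would resolve this via a super-set sampling argument: write the biased distribution on $O_{j_i}^\star$ as a convex combination $\lambda\mu_{\mathrm{uni}} + (1-\lambda)\mu_{\mathrm{rest}}$, where $\mu_{\mathrm{uni}}$ is uniform on a multiset obtained by replicating each point of $O_{j_i}^\star$ in proportion to its $D^2$-weight, so that enumerating size-$M$ subsets of $S$ implicitly exposes a subsample drawn entirely from $\mu_{\mathrm{uni}}$; the approximate triangle inequality of Fact~\ref{lem:triangle} then controls the distortion between the centroid of the replicated multiset and $\Gamma(O_{j_i}^\star)$. A union bound over the $k$ levels yields overall success probability $\geq 1/2$. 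The list $k$-median case follows the same template with the $\ell_1$-analogue of Inaba's lemma, whose larger sample complexity $M = 1/\eps^{O(1)}$ accounts for the stated $\eps^{O(1)}$ overhead in the exponent.
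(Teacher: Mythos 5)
Your overall architecture matches the paper's (a depth-$k$ search tree, $D^2$-sampling at each node, enumeration of all size-$M$ subsets, Inaba's lemma, and a case split on the $D^2$-mass of the heaviest uncovered cluster), but there are genuine gaps at precisely the two points where the constrained setting differs from ordinary $k$-means. In your large-$U$ case, the ``super-set sampling'' decomposition does not resolve the bias problem you correctly identify. If $\mu_{\mathrm{uni}}$ is uniform on the multiset obtained by replicating each point of $\ost{\ti}$ in proportion to its $D^2$-weight, then sampling from it \emph{is} the biased distribution, and its centroid is the $D^2$-weighted centroid of $\ost{\ti}$, which can be arbitrarily far from $\Gamma(\ost{\ti})$: a large fraction of $\ost{\ti}$ may sit essentially on top of the current centers and carry almost no $D^2$-weight, and no application of Fact~\ref{lem:triangle} bounds the resulting displacement without a quantitative cap on how far those invisible points are being ``moved.'' The paper's fix is structural, not distributional: it splits $\ost{\ti}$ at a radius $R$ with $R^2 = \frac{\eps^2}{41}\cdot\frac{\Phi_{C_i}(\ost{\ti})}{|\ost{\ti}|}$ into near and far parts, replaces each near point by its closest current center to form a proxy multiset $\ostp{\ti}$, proves $nR^2$ is small relative to $\Delta(\ost{\ti})$ (Lemma~\ref{lem:IIinter0}) so that a good center for $\ostp{\ti}$ is good for $\ost{\ti}$, shows $D^2$-sampling is uniform up to a constant factor on the \emph{far} part only (Lemma~\ref{lem:osample}), and then crucially enumerates \emph{mixed} subsets $T$ containing both sampled points and copies of the current centers, so that a uniform sample from $\ostp{\ti}$ is simulated. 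Your proposal never forms such mixed subsets — your candidates are either pure sample-subsets or pure ``padded'' points — so the case where part of $\ost{\ti}$ hides near $C_{i-1}$ and part does not is not covered.

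Second, in your small-$U$ case the ``suitable discretization of $\mathrm{conv}(C_{i-1})$'' is left unspecified, and this is not a detail: a generic grid fine enough to approximate $\bar h_j$ to within the required tolerance $\sqrt{(\eps/2k)\opt_k(\bOst)/|O_j^\star|}$ has size depending on the geometry of the data (this is essentially where Ding and Xu pick up their $(\log n)^k$ factor). The paper's discretization is exactly the set of means of size-$M$ sub-multisets of $\{M \text{ copies of } c : c\in C_{i-1}\}$, i.e.\ convex combinations with denominators $M=O(1/\eps)$, and the guarantee that one of them is close enough to $\bar h_j$ comes from applying Lemma~\ref{lemma:inaba} to the proxy multiset $\{c(p): p\in\ost{\ti}\}$ — not from a metric covering argument. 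You would need to supply this (or an equivalent) to bound the list size. Finally, your probability bookkeeping of $1-i/(2k)$ per level via a union bound is not achievable with $M=\Theta(1/\eps)$: Inaba's lemma with $M$ samples only gives success probability $1-\delta$ for $\delta M = O(1/\eps)$, so per-level failure $1/(4k)$ would force $M=\Omega(k/\eps)$ and blow up $\binom{N}{M}$ to $2^{\tilde O(k^2/\eps)}$ overall. The paper instead accepts per-level success probability $1/2$, hence overall $2^{-k}$, and amplifies by repeating the whole construction $2^k$ times; some such amplification is needed in your write-up as well.
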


\noindent
As a corollary of this result we get PTAS for the constrained $k$-means problem (and similarly for the constrained $k$-median problem). \sv{The proof
is deferred to the full version.}
\begin{corollary}
\label{cor:upperbound}
There is a randomized algorithm which given an instance of the constrained $k$-means problem and parameter $\eps > 0$, outputs a solution
of cost at most $(1+\eps)$-times the optimal cost with probability at least $1/2$.
 Further, the time taken by this algorithm is $O\left(n d \cdot 2^{\tilde{O}(k/\eps)} \right) +
2^{\tilde{O}(k/\eps)} \cdot T$, where $T$ denotes the time taken by $\actd$ on this instance.
\end{corollary}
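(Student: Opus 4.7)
The plan is a straightforward reduction from the constrained $k$-means problem to the list $k$-means problem. First, I run the algorithm of Theorem~\ref{thm:upperbound} on $X$, $k$, $\eps$ to obtain the list $\L$ of at most $2^{\tilde{O}(k/\eps)}$ candidate center sets. Next, for each $C \in \L$ I invoke the partition algorithm $\actd$ on $C$ to obtain a feasible clustering $\bO_C \in \ctd$. Finally, for each such pair I compute the cost $\cost_C(\bO_C)$ (or, equivalently better, the optimal $k$-means cost of $\bO_C$ obtained by replacing each center with the mean of its cluster, via Fact~\ref{lem:folklore}) in $O(nd)$ time, and output the pair achieving the minimum.

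For correctness, let $\bOst = \{O_1^\star, \ldots, O_k^\star\} \in \ctd$ realize the constrained $k$-means optimum. Since the best centers for any fixed clustering are the cluster means, this optimum cost equals $\opt_k(\bOst)$. Theorem~\ref{thm:upperbound} guarantees that with probability at least $1/2$ there exists $C^\dagger \in \L$ with $\cost_{C^\dagger}(\bOst) \leq (1+\eps)\,\opt_k(\bOst)$. Because $\actd(C^\dagger)$ by definition returns a clustering $\bO_{C^\dagger} \in \ctd$ minimizing $\cost_{C^\dagger}(\cdot)$ over $\ctd$, we get
\begin{equation*}
\cost_{C^\dagger}(\bO_{C^\dagger}) \;\leq\; \cost_{C^\dagger}(\bOst) \;\leq\; (1+\eps)\,\opt_k(\bOst).
\end{equation*}
Since the procedure enumerates all of $\L$ and returns the pair minimizing the reported cost, its output has cost at most $(1+\eps)\,\opt_k(\bOst)$, which is exactly $(1+\eps)$ times the constrained $k$-means optimum.

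For the runtime, Theorem~\ref{thm:upperbound} contributes $O(nd \cdot 2^{\tilde{O}(k/\eps)})$; the $|\L| = 2^{\tilde{O}(k/\eps)}$ calls to $\actd$ contribute $2^{\tilde{O}(k/\eps)} \cdot T$; and computing the associated costs across all pairs contributes another $O(nd \cdot 2^{\tilde{O}(k/\eps)})$, which is absorbed into the first term. I do not expect any real obstacle: the argument is a clean black-box composition of Theorem~\ref{thm:upperbound} with the partition algorithm $\actd$ that is given as part of the problem specification. The only minor subtlety is that $\cost_C(\cdot)$ is defined as a minimum over permutations of $C$; but this only makes the inequality $\cost_{C^\dagger}(\bO_{C^\dagger}) \leq \cost_{C^\dagger}(\bOst)$ easier, since $\actd$ is already assumed to minimize this permutation-optimized quantity over $\ctd$.
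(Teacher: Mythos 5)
Your proposal is correct and follows essentially the same argument as the paper: run the list $k$-means algorithm, invoke $\actd$ on each candidate center set, and return the minimum-cost pair, with correctness following from $\cost_{C^\dagger}(\bO_{C^\dagger}) \leq \cost_{C^\dagger}(\bOst) \leq (1+\eps)\,\opt_k(\bOst)$. The runtime accounting also matches the paper's.
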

\lv{
\begin{proof}
We use the algorithm in Theorem~\ref{thm:upperbound} to get a list $\L$ for this data-set. For each set $C \in \L$, we invoke $\actd$ with
$C$ as the set of centers -- let $\bO(C)$ denote the clustering produced by $\actd$. We output the clustering for which $\cost_C(\bO(C))$
is minimum. Let $\bOst$ be the optimal clustering, i.e., the clustering in $\ctd$ for which $\opt_k(\bOst)$ is minimum. We know that
with probability at least $1/2$, there is set $C \in \L$ for which $\cost_C(\bOst) \leq (1+\eps) \opt_k(\bOst)$. Now, the solution
produced by our algorithm has cost at most $\cost_C(\bO(C))$, which by definition of $\actd$, is at most $\cost_C(\bOst)$.
\qed
\end{proof}
}
\noindent
We also give a nearly matching lower bound on the size of $\L$. The following result along with Yao's Lemma shows that one cannot reduce the
size of $\L$ to less than $2^{\tilde{\Omega} \left(\frac{k}{\sqrt{\veps}} \right)}$.
\begin{theorem}
\label{thm:lowerbound}
 Given a parameter $k$ and a small enough positive constant $\eps$,
there exists a set $X$ of points in $\mathbb{R}^d$ and a set $\ctd$ of clusterings  of $X$ such that any list $\L$ of centers of size $k$
with the following property must have size at least $2^{\tilde{\Omega} \left(\frac{k}{\sqrt{\veps}} \right)}$~: for at least half of the
clusterings $\bO \in \ctd$, there exists a set $C$ in $\L$ such that $\cost_C(\bO) \leq (1+ \eps) \opt_k(\bO)$.
\end{theorem}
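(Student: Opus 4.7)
My plan is an explicit construction plus a pigeonhole/counting argument: I exhibit a point set $X$ and family $\ctd$ of $k$-partitions with $|\ctd|$ exponentially large, and show that every single $k$-tuple of centers $C$ is $(1+\eps)$-good for only a few partitions of $\ctd$, which by pigeonhole forces $|\L|$ to be large.

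\textbf{The construction.} Choose an integer $s = \Theta(1/\sqrt{\eps})$ and set $d = ks$. Take $X = \{e_1, \ldots, e_d\} \subset \mathbb{R}^d$ to be the standard basis vectors, and let $\ctd$ be the family of all partitions of $X$ into $k$ parts of size exactly $s$. A direct computation gives $\Gamma(O) = \chi_O/s$ and $\Delta(O) = s - 1$ for any $O \subseteq X$ with $|O| = s$, so every $\bO \in \ctd$ has $\opt_k(\bO) = k(s - 1)$; by Stirling, $|\ctd| = (ks)!/((s!)^k\, k!) = 2^{\tilde{\Omega}(k\log k/\sqrt{\eps})}$.

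\textbf{Counting served partitions.} By Fact~\ref{lem:folklore}, the condition $\cost_C(\bO) \leq (1+\eps)\opt_k(\bO)$ is equivalent to the existence of a permutation $\pi$ with $\sum_j s\,\|c_{\pi(j)} - \chi_{O_j}/s\|^2 \leq \eps k(s-1)$, i.e., writing $u_j := s\, c_j$, with $\sum_j \|u_{\pi(j)} - \chi_{O_j}\|^2 \leq B$ where $B = \eps s(s-1) k = O(k)$. Fix $C$ and $\pi$, and suppose two partitions $\bO^1, \bO^2 \in \ctd$ both satisfy this inequality. By Fact~\ref{lem:triangle} applied coordinate-wise and summed over $j$,
\[
\sum_j |O^1_j \triangle O^2_j| \;=\; \sum_j \|\chi_{O^1_j} - \chi_{O^2_j}\|^2 \;\leq\; 2\sum_j \bigl(\|u_{\pi(j)} - \chi_{O^1_j}\|^2 + \|u_{\pi(j)} - \chi_{O^2_j}\|^2\bigr) \;\leq\; 4B \;=\; O(k).
\]
Hence all partitions served by the same $(C, \pi)$ lie within combinatorial distance $O(k)$ of any fixed reference partition; each such partition arises from at most $O(k)$ element relocations, each with at most $d\cdot k = k^2 s$ choices, giving at most $(k^2 s)^{O(k)} = 2^{O(k \log(k/\eps))}$ possibilities per $(C, \pi)$. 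Summing over all $k!$ permutations, each $C$ serves at most $2^{O(k\log(k/\eps))}$ partitions of $\ctd$.

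\textbf{Conclusion and main obstacle.} Pigeonhole yields $|\L| \geq |\ctd|/(2 \cdot 2^{O(k\log(k/\eps))}) = 2^{\tilde{\Omega}(k\log k/\sqrt{\eps}) - O(k\log(k/\eps))} = 2^{\tilde{\Omega}(k/\sqrt{\eps})}$, since $\log k$ dominates $\log(k/\eps)$ up to the polylog factor hidden by $\tilde{\Omega}$. The delicate step is the counting lemma: the pair-triangle-inequality trick converts the analytic condition on the centers $u_j$ into a uniform combinatorial constraint on $\sum_j |O^1_j \triangle O^2_j|$, which makes the count independent of the particular choice of centers. After that, verifying that $s = \Theta(1/\sqrt{\eps})$ is the right scale so that the final exponent becomes $\tilde{\Omega}(k/\sqrt{\eps})$ is routine arithmetic.
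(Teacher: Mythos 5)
Your proof is correct, and it uses exactly the paper's construction (standard basis vectors in dimension $km$ with $m=\Theta(1/\sqrt{\eps})$, the family of all equipartitions into $k$ parts of size $m$) together with the same overall strategy: bound how many clusterings a single center set can serve, then pigeonhole against $|\ctd|=2^{\Theta(km\log k)}$. The difference is in how the key counting lemma is established. The paper first bounds the total squared displacement of a good center set from the cluster means (Lemma~\ref{lem:lower1} gives $\sum_r ||v_r||^2\le k/(m(m-1))$) and then runs a contradiction argument via Cauchy--Schwarz (Lemma~\ref{lem:lower2}) to conclude that two clusterings served by the same $C$ have assignment functions differing in at most $d/2$ coordinates, counting clusterings in a Hamming ball of that radius. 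You instead apply the approximate triangle inequality (Fact~\ref{lem:triangle}) directly to the indicator vectors $\chi_{O^1_j},\chi_{O^2_j}$ with the rescaled centers $u_{\pi(j)}$ as intermediate points, which gives the much stronger bound $\sum_j|O^1_j\triangle O^2_j|\le 4B=O(k)$ and hence only $2^{O(k\log(k/\eps))}$ clusterings per $(C,\pi)$, versus the paper's $2^{O(km)}$-sized ball. Both denominators are small enough against $|\ctd|$, so both yield $2^{\tilde{\Omega}(k/\sqrt{\eps})}$; your route is quantitatively sharper and avoids the paper's somewhat delicate contradiction step, at the cost of an explicit (and harmless) union bound over the $k!$ permutations, which the paper leaves implicit by fixing the matching up front. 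The supporting computations check out: the reduction of $\cost_C(\bO)\le(1+\eps)\opt_k(\bO)$ to $\sum_j\|u_{\pi(j)}-\chi_{O_j}\|^2\le\eps s(s-1)k=O(k)$ is exactly Fact~\ref{lem:folklore}, and the final arithmetic ($ks\log k$ dominating $k\log(k/\eps)$ for small enough $\eps$) is as you say routine.
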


Our techniques also extend to settings involving many other ``approximate" metric spaces \lv{(see the discussion in Section~\ref{sec:concl}).}
\sv{(see the full version for details.)}
Another important observation is that in the lower bound result above, the clusterings in $\ctd$ correspond to Voronoi partitions of $X$.
This throws light on the previous works~\cite{kss,abs10,FeldmanMS07,jks,jky15} as to why the running time of all the algorithms was proportional to $2^{\poly(k/\eps)}$: they were implicitly maintaining a list which satisfied~(\ref{eq:list}) for all Voronoi partitions of $X$, and therefore, our
lower bound result applies to their algorithms as well.

\subsection{Our Techniques}
Our techniques are based on the idea of $D^2$-sampling that was used by Jaiswal et al.~\cite{jks} to give a $(1+\veps)$-approximation algorithm for the $k$-means problem.
Our ideas also have similarities to the ideas of Ding and Xu~\cite{dx15}.
We discuss these similarities towards the end of this subsection.

One of the crucial ingredients that is used in most of the $(1 + \veps)$-approximation algorithms for $k$-means is Lemma~\ref{lemma:inaba}.
This result essentially states that given a set of points $P$, if we are able to uniformly sample $O(1/\eps)$ points from it, then the mean of these sampled points will be a good substitute for the mean of $P$. Consider an optimal clustering $\ost{1}, \ldots, \ost{k}$ for a set of
points $X$. If we could uniformly sample from each of the clusters $\ost{i}$, then by the argument above, we will be done. The first problem
one encounters is that one can only sample from the input set of points, and so, if we sample sufficiently many points from $X$, we need
to somehow distinguish the points which belong to $\ost{i}$ in this sample.
This can be dealt with using the following argument: suppose we manage to get a small sample $S$ of points (say of size $O(\poly(k/\veps))$) that contain at least $\Omega(1/\veps)$ points uniformly distributed in $\ost{i}$, then we can try all possible subsets of $S$ of size $O(1/\veps)$ and ensure that at least one of the subsets is a uniform sample of appropriate size from $\ost{i}$.
Another issue is -- how do we ensure that the sample $S$  has sufficient representation from $\ost{i}$?
Uniform sampling from the input $X$ will not work since $|\ost{i}|$ might be really small compared to the size of $|X|$.
This is where $D^2$-sampling plays a crucial role and we discuss this next.

Given a set of points $X \subseteq \mathbb{R}^d$ and candidate centers $c_1, ..., c_i \in \mathbb{R}^d$, $D^2$-sampling with respect to the centers $c_1, ..., c_i$  samples a point $x \in X$ with probability proportional to $\min_{c \in \{c_1, ..., c_i\}} ||x - c||^2$.
Note that this process ``boosts" the probability of a cluster $\ost{j}$ that has many points far from the set  $\{c_1, \ldots, c_i\}$.
Therefore, even if a cluster $\ost{j}$ has a small size, we will have a good chance of sampling points from it (if it is far from the
current set of centers).
However, this nonuniform sampling technique gives rise to another issue.
The points being sampled are no longer  uniform samples from the optimal clusters.
Depending on the current set of centers, different points in a cluster $\ost{j}$ have different probability of getting sampled.
This issue is not that grave for the $k$-means problem where the optimal clusters are Voronoi regions since we can argue that the probabilities are not very different.
However, for the constrained $k$-means problem where the optimal clusters are allowed to be arbitrary partition of the input points, this problem becomes more serious.
This can be illustrated using the following example.
Suppose we have managed to pick centers $c_1, \ldots, c_i$ that are good (in terms of cluster cost) for the optimal clusters $O_1^\star, \ldots, O_i^\star$.
At this point let $\ost{j}$ denote the cluster other than $O_1^\star, \ldots , O_i^\star$, such that a point sampled using $D^2$ sampling w.r.t. $c_1, \ldots, c_i$ is most likely to be from $\ost{j}$.
Suppose we sample a set $S$ of $O(k/\eps)$ points using $D^2$-sampling.
Are we guaranteed (w.h.p.) to have a subset in $S$ that is a uniform sample from $\ost{j}$?
The answer is no (actually quite far from it).
This is because the optimal clusters may form an arbitrary partition of the data-set and it is possible that most of the points in $\ost{j}$ might be very close to the centers $c_1, \ldots, c_i$.
In this case the probability of sampling such points will be close to $0$.
The way we deal with this scenario is that we consider a multi-set $S'$ that is the union of the set of samples $S$ and $O(1/\veps)$ copies of each of $c_1, \ldots, c_i$.
We then argue that all the points in $\ost{j}$ that is far from $c_1, \ldots, c_i$ will have a good chance of being represented in $S$ (and hence in $S'$).
On the other hand, even though the points that are close to one of $c_1, \ldots, c_i$ will not be represented in $S$ (and hence $S'$), the center
(among $c_1, \ldots, c_i$) that is close to these points have good representation in $S'$ and these centers may be regarded as ``proxy" for the points in $\ost{j}$.

Ding and Xu~\cite{dx15}, instead of using the idea of $D^2$-sampling, rely on the ideas of Kumar et al.~\cite{kss} which involves uniform sampling of points and then pruning the data-set by removing the points that are close to centers that are currently being considered.
In their work, they also encounter the  problem that  points from some optimal cluster might be close to the current set good centers (and hence will be removed before uniform sampling).
Ding and Xu~\cite{dx15} deal with this issue using what they call a ``simplex lemma".
Consider the same scenario as in the previous paragraph.
At a very high level, they consider  grids inside several  simplices defined by the current centers $c_1, \ldots, c_i$
and the sampled points.
Using the simplex lemma, they argue that one of the points inside these grids will be a good center for  the cluster $\ost{j}$.

We now give an overview of the paper. In Section~\ref{sec:algo}, we give the algorithm for generating the list of sets of centers for an instance
of the list $k$-means problem. The algorithm is analyzed in Section~\ref{sec:analysis}.
\lv{In Section~\ref{sec:lower}, we give the lower bound result on the size of the list $\L$. In Section~\ref{sec:ext}, we discuss how our algorithm can be extended to the list $k$-median problem. We conclude with a brief discussion on extensions to other metrics in Section~\ref{sec:concl}. }
\sv{Details about the lower bound construction (Theorem~\ref{thm:lowerbound} and extensions to the $k$-median problem, and other distance metric settings,  are discussed in the full version.}





\section{The Algorithm}
\label{sec:algo}
Consider an instance of the list $k$-means problem. Let $X$ denote the set of points, and $\eps$ be a positive parameter.
The algorithm {\bf List-$k$-means} is described in Figure~\ref{fig:k}. It maintains a set $C$ of centers, which is initially empty.
Each recursive call to the function {\bf Sample-centers} increases the size of $C$ by one.
In Step~2 of this function, the algorithm tries out various candidates
which can be added to $C$ (to increase its size by $1$).
First, it builds a multi-set $S$ as follows: it independently samples (with replacement) $O(k/\eps^3)$ points using $D^2$-sampling from $X$ w.r.t. the set $C$.
Further, it adds $O(1/\eps)$ copies of each of the centers in $C$ to the set $S$.
Having constructed $S$, we consider all subsets of size $O(1/\eps)$ of $S$ -- for each such subset we try adding the mean of this set to $C$.
Thus, each invocation of {\bf Sample-centers} makes multiple recursive calls to itself ($|S| \choose {M}$ to be precise).
It will be useful to think of the execution of this algorithm as a tree $\T$ of depth $k$.
Each node in the tree can be labeled with a set $C$ -- it corresponds to the invocation of {\bf Sample-centers} with this set as $C$ (and $i$ being the depth of this node).
The children of a node denote the recursive function calls by the corresponding invocation of {\bf Sample-centers}.
Finally, the leaves denote the set of candidate centers produced by the algorithm.
\begin{center}
\begin{Algorithm}
\begin{boxedminipage}{\textwidth}
{\bf List-$k$-means($X, k, \veps$)}

\hspace*{0.1in} - Let $N = \frac{136448 \cdot k}{\eps^3}$, $M = \frac{100}{\veps}$ \\
\hspace*{0.1in} - Initialize $\L$ to $\emptyset$. \\
\hspace*{0.1in} - Repeat $2^k$ times: \\
\hspace*{0.3in} - Make a call to {\bf Sample-centers$(X, k, \eps, 0, \{\})$}. \\
\hspace*{0.1in} - Return $\L$.

{\bf Sample-centers$(X, k, \eps, i, C)$} \\
\sp \sp \sp \sp \sp \sp \sp \sp (1) If $(i = k)$ then add $C$ to the set $\L$. \\
\sp \sp \sp \sp \sp \sp \sp \sp (2) else \\
\sp \sp \sp \sp \sp \sp \sp \sp \sp \sp \sp \sp (a) Sample a multi-set $S$ of $N$ points with $D^2$-sampling (w.r.t. centers $C$) \\
\sp \sp \sp \sp \sp \sp \sp \sp \sp \sp \sp \sp  (b) $S' \leftarrow S$ \\
 \sp \sp \sp \sp \sp \sp \sp \sp \sp \sp \sp \sp  (c) For all $c \in C$: $S' \leftarrow S' \cup \{\textrm{$M$ copies of $c$}\}$ \\
 \sp \sp \sp \sp \sp \sp \sp \sp \sp \sp \sp \sp  (d) For all subsets $T \subset S'$ of size $M$: \\
 \sp \sp \sp \sp \sp \sp \sp \sp \sp \sp \sp \sp   \sp \sp \sp \sp (i) $C \leftarrow C \cup \{\Gamma(T)\}$. \\ 
\sp \sp \sp \sp \sp \sp \sp \sp \sp \sp \sp \sp   \sp \sp \sp \sp  (ii) {\bf Sample-centers$(X, k, \eps, i+1, C)$}
\end{boxedminipage}
\caption{Algorithm for list $k$-means}
\label{fig:k}
\end{Algorithm}
\end{center}
\vspace*{-0.8in}

\section{Analysis}\label{sec:analysis}
In this section we prove Theorem~\ref{thm:upperbound} for the list $k$-means problem.
Let $\L$ denote the set of candidate solutions produced by {\bf List-$k$-means}, where
a solution corresponds to a set of centers $C$ of size $k$.
These solutions are output  at the leaves of the execution tree $\T$. Fix a clustering $\bOst=\{\ost{1}, \ldots, \ost{k}\}$ of
$X$.
%
%
Recall that a node $v$ at depth $i$ in the execution tree $\T$ corresponds to a set $C$ of size $i$ -- call this set $C_v$.
Our proof will argue inductively that for each $i$, there will be a node $v$ at depth $i$ such that the centers chosen so far in $C_v$ are {\em good} with respect to a subset of $i$ clusters in $\ost{1}, \ldots, \ost{k}$.
We will argue that the following invariant $P(i)$ is maintained during the recursive calls to {\bf Sample-centers}:

\begin{quote}
{\boldmath${P(i)}$}: With probability at least $\frac{1}{2^{i-1}}$, there is a node $v_i$ at depth $(i-1)$ in the tree $\T$ and a set of $(i-1)$ distinct  clusters $\ost{j_1}, \ost{j_2}, ..., \ost{j_{i-1}}$ such that
\begin{equation}\label{eqn:invariant}
\forall l \in \{1, ..., i-1\}, \Phi_{c_l}(\ost{j_l}) \leq \left( 1 + \frac{\eps}{2} \right) \cdot \Delta(\ost{j_l}) + \frac{\eps}{2k} \cdot \opt_k(\bOst),
\end{equation}
where $c_1, \ldots, c_{i-1}$ are the centers in the set $C_{v_i}$ corresponding to $v_i$. Recall that $\Delta(\ost{j_l})$ refers to the optimal $1$-means cost of $\ost{j_l}$.
\end{quote}

The proof of the main theorem follows easily from this invariant property -- indeed, the statement $P(k)$ holds with probability at least $1/2^k$. Since
the algorithm {\bf List-$k$-means} invokes {\bf Sample-centers} $2^k$ times, the probability of the statement in $P(k)$ being true
in at least one of these invocations is at
least a constant. We now prove the invariant by induction on $i$.
The base case for $i=1$ follows trivially: the vertex $v_1$ is the root of the tree $\T$ and $C_{v_1}$ is empty.
Now assume that $P(i)$ holds for some $i \geq 1$.
We will prove that $P(i+1)$ also holds.
We first condition on the event in $P(i)$ (which happens with probability at least $\frac{1}{2^{i-1}}$).
Let $v_i$ and $\ost{j_1}, \ldots, \ost{j_{i-1}}$ be as guaranteed by the invariant $P(i)$.
Let $C_{v_i} = \{c_1, \ldots, c_{i-1}\}$ (as in the statement $P(i)$).
For sake of ease of notation, we assume without loss of generality that the index $j_i$ is $i$, and we shall use $C_i$ to denote $C_{v_i}$.
Thus, the center $c_l$ corresponds to the cluster $\ost{l}$, $1 \leq l \leq i-1$.
Note that for a cluster $\ost{i'}, i' \geq i$, $\Phi_{C_i}({\ost{i'}})$ is proportional to the probability that a point sampled from $X$ using $D^2$-sampling
w.r.t. $C_i$ comes from the set $\ost{i'}$ -- let $\ti \in \{i, \ldots, k\}$ be the index $i'$ for which $\Phi_{C_i}(\ost{i'})$ is maximum.
We will argue that the invocation of {\bf Sample-centers} corresponding to $v_i$ will try out a point $c_i$ (in Step 2(d)(i)) such that the following property will hold with probability at least $1/2$: $\Phi_{c_i}(\ost{\ti}) \leq (1 + \veps/2) \cdot \Delta(\ost{\ti}) + (\eps/2k) \cdot \opt_k(\bOst)$.
For doing this, we break the analysis into the following two parts.
These two parts are discussed in the next two subsections that follow.

\noindent
{\bf \underline{Case I}} $\left( \boldmath{\frac{\Phi_{C_i}(\ost{\ti})}{\sum_{j=1}^{k} \Phi_{C_i}(\ost{j})} < \frac{\eps}{13k}} \right)$: This captures the scenario where the probability of sampling from any of the uncovered clusters is very small.
Note that for the classical $k$-means problem, this is not an issue because in this case we can argue that the current set of centers $C$ already provides a good approximation for the entire set of data points and we are done.
However, for us this is an issue --- for example, assuming $i > 2$, it is possible that some of the points in $\ost{\ti}$ are close to $c_1$, whereas the remaining points of this cluster are close to $c_2$.
Still we need to output a center for $\ost{\ti}$.
In this case we argue that it will be sufficient to output a suitable convex combination of $c_1$ and $c_2$.

\noindent
{\bf \underline{Case II}} $\left(\boldmath{\frac{\Phi_{C_i}(\ost{\ti})}{\sum_{j=1}^{k} \Phi_{C_i}(\ost{j})} \geq \frac{\eps}{13k}} \right)$: In this case, we argue that with good probability we will sample  sufficient points from $\ost{\ti}$ during Step 2(a) of {\bf Sample-centers}.
Further, we will show that a suitable combination of such points along with centers in  $C_i$ will be a good center for $\ost{\ti}$.

\subsubsection{Case I $\left(\frac{\Phi_{C_i}(\ost{\ti})}{\sum_{j=1}^{k} \Phi_{C_i}(\ost{j})} < \frac{\eps}{13k} \right)$}:\\

In this case we argue that a convex combination of the centers in $C_i$  provides a good approximation to $\Delta(\ost{\ti})$.
Intuitively, this is because the points in $\ost{\ti}$ are close to the points in the set $C_i$.
This convex combination is essentially ``simulated" by taking $O(1/\eps)$ copies of each of the centers $c_1, ..., c_{i-1}$ in the multi-set $S$ and then trying all possible subsets of size $O(1/\eps)$.
The formal analysis follows.
First, we note that  $\Phi_{C_i}(\ost{\ti})$ should be small compared to $\opt_k(\bOst)$. \sv{The proof is deferred to the full version.}
\begin{lemma}\label{lem:inter0}
$\Phi_{C_i}(\ost{\ti}) \leq \frac{\eps}{6k} \cdot \opt_k(\bOst)$.
\end{lemma}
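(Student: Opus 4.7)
The plan is to estimate $\sum_{j=1}^k \Phi_{C_i}(\ost{j})$ by splitting into the ``covered'' clusters $j < i$ (for which the invariant $P(i)$ gives us a good handle on $\Phi_{c_j}(\ost{j})$) and the ``uncovered'' clusters $j \geq i$ (which are all dominated by $\Phi_{C_i}(\ost{\ti})$ by the choice of $\ti$). Combining this bound with the Case~I hypothesis will then let us solve for $\Phi_{C_i}(\ost{\ti})$.

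More concretely, first I would observe that since $c_j \in C_i$, we have $\Phi_{C_i}(\ost{j}) \leq \Phi_{c_j}(\ost{j})$ for every $j \leq i-1$. Using the invariant $P(i)$ (inequality~(\ref{eqn:invariant})) and summing over $j = 1,\ldots,i-1$ (with the reindexing $j_l \mapsto l$ already in force), one obtains
$$
\sum_{j=1}^{i-1} \Phi_{C_i}(\ost{j}) \;\leq\; \left(1+\tfrac{\eps}{2}\right)\sum_{j=1}^{i-1}\Delta(\ost{j}) + (i-1)\cdot\tfrac{\eps}{2k}\cdot\opt_k(\bOst) \;\leq\; (1+\eps)\cdot\opt_k(\bOst),
$$
where the last step uses $\sum_j \Delta(\ost{j}) \leq \opt_k(\bOst)$ and $i-1 \leq k$. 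Next, by the very definition of $\ti$ as the argmax of $\Phi_{C_i}(\ost{j})$ over $j \in \{i,\ldots,k\}$, we get
$$
\sum_{j=i}^{k} \Phi_{C_i}(\ost{j}) \;\leq\; (k-i+1)\cdot \Phi_{C_i}(\ost{\ti}) \;\leq\; k\cdot \Phi_{C_i}(\ost{\ti}).
$$
Adding the two bounds yields $\sum_{j=1}^{k} \Phi_{C_i}(\ost{j}) \leq (1+\eps)\opt_k(\bOst) + k\cdot\Phi_{C_i}(\ost{\ti})$.

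Finally, I would plug this into the Case~I hypothesis $\Phi_{C_i}(\ost{\ti}) < \frac{\eps}{13k}\sum_j \Phi_{C_i}(\ost{j})$, which gives
$$
\Phi_{C_i}(\ost{\ti}) \;<\; \tfrac{\eps}{13k}\left[(1+\eps)\opt_k(\bOst) + k\cdot \Phi_{C_i}(\ost{\ti})\right],
$$
i.e., $\Phi_{C_i}(\ost{\ti})\cdot(1-\eps/13) < \tfrac{\eps(1+\eps)}{13k}\opt_k(\bOst)$. For $\eps \leq 1$ a direct check shows $\tfrac{\eps(1+\eps)}{13-\eps} \leq \tfrac{\eps}{6}$ (equivalent to $6(1+\eps) \leq 13 - \eps$, i.e., $7\eps \leq 7$), and rearranging delivers the claimed bound $\Phi_{C_i}(\ost{\ti}) \leq \tfrac{\eps}{6k}\cdot \opt_k(\bOst)$.

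There is no real obstacle here; the only subtlety is being careful that the invariant $P(i)$ controls $\Phi_{c_j}(\ost{j})$ rather than $\Phi_{C_i}(\ost{j})$, which is handled by the trivial monotonicity $\Phi_{C_i} \leq \Phi_{c_j}$ for $c_j \in C_i$, and that the additive $\eps/(2k)$ slack accumulates to only $\eps/2$ after summing over $i-1 \leq k$ terms so the whole $j<i$ part stays below $(1+\eps)\opt_k(\bOst)$.
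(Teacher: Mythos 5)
Your proof is correct and follows essentially the same route as the paper's: bound the covered clusters via the invariant $P(i)$ and monotonicity of $\Phi$, bound the uncovered ones by $k\cdot\Phi_{C_i}(\ost{\ti})$, and combine with the Case~I hypothesis to solve for $\Phi_{C_i}(\ost{\ti})$. The only cosmetic difference is that the paper first solves the resulting self-referential inequality for $D=\sum_j\Phi_{C_i}(\ost{j})$ and then multiplies by $\eps/(13k)$, whereas you solve directly for $\Phi_{C_i}(\ost{\ti})$; the final numerics $\frac{\eps(1+\eps)}{k(13-\eps)}\le\frac{\eps}{6k}$ are identical.
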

\lv{
\begin{proof}
Let $D$ denote $\sum_{j=1}^{k} \Phi_{C_i}(\ost{j})$.
The induction hypothesis and the fact that $\Phi_{C_i}(\ost{\ti}) \geq \Phi_{C_i}(\ost{j}), j \geq i,$ imply that
$$ D = \sum_{j=1}^{i-1} \Phi_{C_i}(\ost{j}) + \sum_{j=i}^k \Phi_{C_i}(\ost{j}) \leq \left( 1 + \frac{\eps}{2} \right) \cdot \sum_{j=1}^{i-1} \Delta(\ost{j}) + \frac{\eps}{2} \cdot \opt_k(\bOst) + k \cdot \Phi_{C_i}({\ost{\ti}}).$$
Since $\Phi_{C_i}(\ost{\ti}) \leq \frac{\eps}{13k} \cdot D$ and $\sum_{j=1}^{i-1} \Delta(\ost{j}) \leq \opt_k(\bOst)$, we get $D \leq \frac{\eps}{13} \cdot D + \left( 1 + \eps \right) \cdot \opt_k(\bOst).$
Thus, $D \leq \left( \frac{1+\eps}{1-\eps/13} \right) \cdot \opt_k(\bOst)$.
Finally, $\Phi_{C_i}(\ost{\ti}) \leq \frac{\eps}{13k} \cdot D \leq \frac{\eps}{6k} \cdot \opt_k(\bOst)$.
\qed
\end{proof}
}
For each point $p \in \ost{\ti}$, let $c(p)$ denote the closest center in $C_i$.
We now define a multi-set $\ostp{\ti}$ as $\{c(p) : p \in \ost{\ti} \}$.
Note that $\ostp{\ti}$ is obtained by taking multiple copies of points in $C_i$.
The remaining part of the proof proceeds in two steps.
Let $m^\star$ and $m'$ denote the mean of $\ost{\ti}$ and $\ostp{\ti}$ respectively.
We first show that $m^\star$ and $m'$ are close, and so, assigning all the points of $\ost{\ti}$ to $m'$ will have cost close to $\Delta(\ost{\ti})$.
Secondly, we show that if we have a good approximation $m''$ to $m'$, then assigning all the points of $\ost{\ti}$ to $m''$ will also incur small cost (comparable to $\Delta(\ost{\ti})$).
We now carry out these steps in detail.
Observe that
\begin{eqnarray}
\label{eq:dist1}
\sum_{p \in \ost{\ti}} ||p-c(p)||^2 = \Phi_{C_i}(\ost{\ti}).
\end{eqnarray}

\begin{lemma}\label{lem:mm'}
$||m^\star - m'||^2 \leq \frac{\Phi_{C_i}(\ost{\ti})}{|\ost{\ti}|}$.
\end{lemma}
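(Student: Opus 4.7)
The plan is to exhibit $m^\star - m'$ as the average of the displacement vectors $p - c(p)$ and then apply a convexity/Cauchy--Schwarz inequality to convert an average of norms into a bound on the norm of an average.

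First I would let $n' = |\ost{\ti}|$ and write directly from the definitions
\begin{equation*}
m^\star - m' \;=\; \frac{1}{n'} \sum_{p \in \ost{\ti}} p \;-\; \frac{1}{n'} \sum_{p \in \ost{\ti}} c(p) \;=\; \frac{1}{n'} \sum_{p \in \ost{\ti}} (p - c(p)).
\end{equation*}
This step only uses that $\ostp{\ti}$ has the same multiplicity structure as $\ost{\ti}$ (each $p$ contributes one copy of $c(p)$), so $|\ostp{\ti}| = n'$.

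Next I would apply the power-mean inequality $\left\|\frac{1}{n'}\sum_j v_j\right\|^2 \leq \frac{1}{n'}\sum_j \|v_j\|^2$ (a special case of Cauchy--Schwarz, or equivalently the convexity of $\|\cdot\|^2$) to the vectors $v_p = p - c(p)$. This gives
\begin{equation*}
\|m^\star - m'\|^2 \;\leq\; \frac{1}{n'} \sum_{p \in \ost{\ti}} \|p - c(p)\|^2.
\end{equation*}
Finally I would invoke the identity already recorded in~(\ref{eq:dist1}), namely $\sum_{p \in \ost{\ti}} \|p - c(p)\|^2 = \Phi_{C_i}(\ost{\ti})$, and substitute to obtain the claimed bound $\|m^\star - m'\|^2 \leq \Phi_{C_i}(\ost{\ti})/|\ost{\ti}|$.

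There is no real obstacle here: the lemma is essentially just the statement that ``the mean of the closest-center map is close to the true mean, with slack controlled by the average squared assignment distance.'' The only thing to be careful about is the multiset bookkeeping --- $\ostp{\ti}$ may contain repeated copies of the same $c \in C_i$ (when many points of $\ost{\ti}$ share the same nearest center), and one must make sure the sum defining $m'$ is taken with these multiplicities, so that the telescoping into $\frac{1}{n'}\sum_p (p - c(p))$ is valid.
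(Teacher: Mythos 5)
Your proof is correct and is essentially identical to the paper's: both express $m^\star - m'$ as the average of the displacements $p - c(p)$ over $\ost{\ti}$, apply Cauchy--Schwarz (equivalently, convexity of $\|\cdot\|^2$) to bound the squared norm of the average by the average of the squared norms, and then identify $\sum_{p \in \ost{\ti}} \|p-c(p)\|^2$ with $\Phi_{C_i}(\ost{\ti})$ via~(\ref{eq:dist1}). Your remark on tracking multiplicities in the multiset $\ostp{\ti}$ is a sensible point of care but does not change the argument.
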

\begin{proof}
Let $n$ denote $|\ost{\ti}|$. Then,
$$ ||m^\star - m'||^2 = \frac{1}{n^2} \left| \left| \sum_{p \in \ost{\ti}} (p-c(p)) \right|\right|^2 \leq \frac{1}{n} \sum_{p \in \ost{\ti}} ||p-c(p)||^2
= \frac{\Phi_{C_i}(\ost{\ti})}{n},$$
where the second last inequality follows from Cauchy-Schwartz
\footnote{For any real numbers $a_1, ..., a_m, (\sum_r a_r)^2/m \leq \sum_r a_r^2$.}.
\qed
\end{proof}

\noindent
Now we show that $\Delta(\ost{\ti})$ and $\Delta(\ostp{\ti})$ are close.

\begin{lemma}\label{lem:inter1}
$\Delta(\ostp{\ti}) \leq 2 \cdot \Phi_{C_i}(\ost{\ti}) + 2 \cdot \Delta(\ost{\ti})$.
\end{lemma}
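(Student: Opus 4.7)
The plan is to exploit two standard facts already stated in the excerpt: (i) for any finite point set $S$, the centroid $\Gamma(S)$ minimizes $\sum_{q\in S}\|q-c\|^2$ over all $c$ (an immediate consequence of Fact~\ref{lem:folklore}), and (ii) the approximate triangle inequality $\|x-z\|^2 \le 2\|x-y\|^2+2\|y-z\|^2$ (Fact~\ref{lem:triangle}). These two together will give the claim in two short steps.

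First I would replace the optimal center $m'=\Gamma(\ostp{\ti})$ of the multi-set $\ostp{\ti}$ by the (possibly suboptimal) center $m^\star=\Gamma(\ost{\ti})$. By fact~(i) applied to the multi-set $\ostp{\ti}=\{c(p):p\in\ost{\ti}\}$, we have
\[
\Delta(\ostp{\ti}) \;=\; \sum_{p\in\ost{\ti}} \|c(p)-m'\|^2 \;\le\; \sum_{p\in\ost{\ti}} \|c(p)-m^\star\|^2 .
\]
This is the only point where I use that $m'$ is defined as the mean of $\ostp{\ti}$; from here on only $m^\star$ appears.

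Next I would apply the approximate triangle inequality, with $y=p$ serving as the intermediate point, to each summand $\|c(p)-m^\star\|^2$:
\[
\|c(p)-m^\star\|^2 \;\le\; 2\,\|c(p)-p\|^2 + 2\,\|p-m^\star\|^2 .
\]
Summing this over all $p\in\ost{\ti}$ and recalling (\ref{eq:dist1}), i.e.\ $\sum_{p\in\ost{\ti}}\|p-c(p)\|^2 = \Phi_{C_i}(\ost{\ti})$, and the definition $\Delta(\ost{\ti})=\sum_{p\in\ost{\ti}}\|p-m^\star\|^2$, yields
\[
\sum_{p\in\ost{\ti}} \|c(p)-m^\star\|^2 \;\le\; 2\,\Phi_{C_i}(\ost{\ti}) + 2\,\Delta(\ost{\ti}),
\]
which combined with the previous inequality gives the desired bound.

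There is no real obstacle here: the lemma is essentially a two-line manipulation once the right substitutions are made. The only point worth being careful about is that $\ostp{\ti}$ is a multi-set (the same center $c\in C_i$ appears once for each $p\in\ost{\ti}$ with $c(p)=c$), so all sums must be indexed by the elements $p$ of $\ost{\ti}$ rather than by the distinct centers in $C_i$. With this bookkeeping the argument goes through cleanly.
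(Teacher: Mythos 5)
Your proof is correct and is essentially identical to the paper's: both first use the centroid-minimality consequence of Fact~\ref{lem:folklore} to replace $m'$ by $m^\star$, and then apply Fact~\ref{lem:triangle} with $p$ as the intermediate point before summing. Nothing to add.
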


\begin{proof}
The lemma follows by the following inequalities:
\begin{eqnarray*}
\Delta(\ostp{\ti}) & = & \sum_{p \in \ost{\ti}} ||c(p)-m'||^2
\stackrel{\mbox{\small{Fact~\ref{lem:folklore}}}}{\leq}   \sum_{p \in \ost{\ti}} ||c(p)-m^\star||^2   \\
& \stackrel{\mbox{\small{Fact~\ref{lem:triangle}}}}{\leq} & 2 \cdot \sum_{p \in \ost{\ti}}  \left( ||c(p)-p||^2 + ||p-m^\star||^2 \right)
 =  2 \cdot \Phi_{C_i}(\ost{\ti}) + 2 \cdot \Delta(\ost{\ti}).
\end{eqnarray*}
\qed
\end{proof}

Finally, we argue that a good center for $\ostp{\ti}$ will also serve as a good center for $\ost{\ti}$.
\begin{lemma}\label{lem:inter2}
Let $m''$ be a point such that $\Phi_{m''}(\ostp{\ti}) \leq \left( 1+\frac{\eps}{8} \right) \cdot \Delta(\ostp{\ti})$.
Then $\Phi_{m''}(\ost{\ti}) \leq \left(1 + \frac{\eps}{2} \right) \cdot \Delta(\ost{\ti}) + \frac{\veps}{2k} \cdot \opt_k(\bOst)$.
\end{lemma}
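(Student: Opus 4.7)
The plan is to reduce the cost of $m''$ on the original set $\ost{\ti}$ to the cost of $m''$ on the proxy multi-set $\ostp{\ti}$ (which is what the hypothesis controls), plus correction terms that can be absorbed by Lemma~\ref{lem:inter0} and Lemma~\ref{lem:inter1}. The natural first step is to invoke Fact~\ref{lem:folklore} twice, since both $\ost{\ti}$ and $\ostp{\ti}$ have known means $m^\star$ and $m'$ respectively and $|\ostp{\ti}| = |\ost{\ti}|$. This gives
\[
\Phi_{m''}(\ost{\ti}) = \Delta(\ost{\ti}) + |\ost{\ti}| \cdot \|m'' - m^\star\|^2,
\qquad
|\ost{\ti}| \cdot \|m'' - m'\|^2 = \Phi_{m''}(\ostp{\ti}) - \Delta(\ostp{\ti}) \leq \tfrac{\eps}{8} \Delta(\ostp{\ti}),
\]
where the last inequality uses the hypothesis of the lemma.

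Next I would bound $\|m'' - m^\star\|^2$ by relaying through $m'$: the approximate triangle inequality (Fact~\ref{lem:triangle}) yields
\[
|\ost{\ti}| \cdot \|m'' - m^\star\|^2 \;\leq\; 2\,|\ost{\ti}| \cdot \|m'' - m'\|^2 + 2\,|\ost{\ti}| \cdot \|m' - m^\star\|^2 \;\leq\; \tfrac{\eps}{4}\Delta(\ostp{\ti}) + 2\,\Phi_{C_i}(\ost{\ti}),
\]
where the first term used the displayed bound above and the second term used Lemma~\ref{lem:mm'} (which gives $|\ost{\ti}|\cdot\|m'-m^\star\|^2 \le \Phi_{C_i}(\ost{\ti})$).

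Now I would expand $\Delta(\ostp{\ti})$ using Lemma~\ref{lem:inter1} to replace it by $2\Phi_{C_i}(\ost{\ti}) + 2\Delta(\ost{\ti})$, giving an overall bound of the form $(2 + \tfrac{\eps}{2}) \Phi_{C_i}(\ost{\ti}) + \tfrac{\eps}{2}\Delta(\ost{\ti})$. The final step is to use Lemma~\ref{lem:inter0}, namely $\Phi_{C_i}(\ost{\ti}) \leq \tfrac{\eps}{6k}\opt_k(\bOst)$, so that the $\Phi_{C_i}(\ost{\ti})$ term gets absorbed into $\tfrac{\eps}{2k}\opt_k(\bOst)$ (one checks $(2+\eps/2)/6 \leq 1/2$ for any reasonable $\eps \leq 2$). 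Plugging back into the decomposition of $\Phi_{m''}(\ost{\ti})$ yields exactly
\[
\Phi_{m''}(\ost{\ti}) \leq (1+\tfrac{\eps}{2})\Delta(\ost{\ti}) + \tfrac{\eps}{2k}\opt_k(\bOst).
\]

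There is no real obstacle here; the proof is essentially a bookkeeping exercise once the correct decomposition $m^\star \leadsto m' \leadsto m''$ via the approximate triangle inequality is set up. The only thing to be careful about is keeping track of the constants: the $1/8$ slack in the hypothesis and the $1/6$ slack in Lemma~\ref{lem:inter0} are exactly what is needed to fit both errors under the $\eps/2$ budget in the conclusion.
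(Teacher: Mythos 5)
Your proposal is correct and follows essentially the same route as the paper's proof: decompose $\Phi_{m''}(\ost{\ti})$ via Fact~\ref{lem:folklore}, relay $\|m''-m^\star\|^2$ through $m'$ with Fact~\ref{lem:triangle}, control the two pieces by Lemma~\ref{lem:mm'} and a second application of Fact~\ref{lem:folklore} together with the hypothesis, and then absorb the error terms using Lemmas~\ref{lem:inter1} and~\ref{lem:inter0}. The constants match the paper's accounting exactly, so there is nothing to add.
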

\begin{proof}
Let $n^\star$ denote $|\ost{\ti}|$.
Observe that
\begin{eqnarray*}
\Phi_{m''}(\ost{\ti}) &=& \sum_{p \in \ost{\ti}} ||m''-p||^2 \stackrel{\mbox{\small{Fact~\ref{lem:folklore}}}}{=}
 \sum_{p \in \ost{\ti}} ||m^\star-p||^2 +  n^\star \cdot ||m^\star-m''||^2 \\
 & \hspace*{-1cm} \stackrel{\mbox{\small{Fact~\ref{lem:triangle}}}}{\leq} &
 \Delta(\ost{\ti}) + 2n^\star \left( ||m^\star - m'||^2 + ||m'-m''||^2 \right)  \stackrel{\mbox{\small{Lemma~\ref{lem:mm'}}}}{\leq}
 \Delta(\ost{\ti}) + 2 \cdot \Phi_{C_i}(\ost{\ti}) + 2 n^\star ||m'-m''||^2 \\
 & \hspace*{-1cm} \stackrel{\mbox{\small{Fact~\ref{lem:folklore}}}}{\leq} &
 \Delta(\ost{\ti}) + 2 \cdot \Phi_{C_i}(\ost{\ti}) + 2 \left( \Phi_{m''}(\ostp{\ti}) - \Delta(\ostp{\ti})\right)
 \ \leq \ \Delta(\ost{\ti}) + 2 \cdot \Phi_{C_i}(\ost{\ti}) + \frac{\eps}{4} \cdot \Delta(\ostp{\ti}) \\
& \hspace*{-1cm} \stackrel{\mbox{\small{Lemma~\ref{lem:inter1}}}}{\leq} & \Delta(\ost{\ti}) + 2 \cdot \Phi_{C_i}(\ost{\ti}) + \frac{\eps}{2} \cdot
 \left( \Phi_{C_i}(\ost{\ti}) + \Delta(\ost{\ti})\right)  \stackrel{\mbox{\small{Lemma~\ref{lem:inter0}}}}{\leq}
 \left(1 + \frac{\eps}{2} \right) \cdot \Delta(\ost{\ti}) + \frac{\veps}{2k} \cdot \opt_k(\bOst)
\end{eqnarray*}
This completes the proof of the lemma.
\qed
\end{proof}

The above lemma tells us that it will be sufficient to obtain a $(1 + \veps/8)$-approximation to the $1$-means problem for the dataset $\ostp{\ti}$.
Now, Lemma~\ref{lemma:inaba} tells us that there is a subset (again as a multi-set) $O''$
of size $\frac{16}{\veps}$ of $\ostp{\ti}$ such that the mean $m''$ of these points satisfies the conditions of Lemma~\ref{lem:inter2}.
Now, observe that $O''$ will be a subset of the set $S$ constructed in Step 2 of the algorithm {\bf Sample-center} -- indeed, in Step 2(c),
we add more than $\frac{16}{\eps}$ copies of {\em each} point in $C_i$ to $S$. Now, in Step 2(d),  we will try out all subsets of size
$\frac{16}{\eps}$ of $S$ and for each such subset, we will try adding its mean to $C_i$. In particular, there will be a recursive call of
this function, where we will have $C_{i+1}=C_i \cup \{ m''\}$ as the set of centers. Lemma~\ref{lem:inter2} now implies that $C_{i+1}$
will satisfy the invariant $P(i+1)$. Thus, we are done in this case.

\subsubsection{Case II $\left( \frac{\Phi_{C_i}(\ost{\ti})}{\sum_{j} \Phi_{C_i}(\ost{j})} \geq \frac{\eps}{13k} \right)$}:

\noindent
In this case, we would like to prove that we add a good approximation to the mean of $\ost{\ti}$ to the set $C_i$.
Again, consider the invocation of {\bf Sample-centers} corresponding to $C_i$.
We want the multi-set $S$ to contain a good representation from points in the set $\ost{\ti}$.
Secondly, in order to apply Lemma~\ref{lemma:inaba}, we will need this representation to be a uniform sample from $\ost{\ti}$.
Since $\Phi_{C_i}(\ost{\ti}) \geq \frac{\veps}{13k} \cdot \sum_j \Phi_{C_i}(\ost{j})$, the probability that a point sampled using $D^2$ sampling w.r.t. $C_i$ is from $\ost{\ti}$ is not too small.
So, the multi-set $S$ will have non-negligible representation from the set $\ost{\ti}$. However the points from $\ost{\ti}$ in $S$ may not be a uniform sample from $\ost{\ti}$.
Indeed, suppose there is a good fraction of points of $\ost{\ti}$ which are close to $C_i$, and remaining points of $\ost{\ti}$ are quite far from $C_i$.
Then, $D^2$-sampling w.r.t. to $C_i$ will not give us a uniform sample from $\ost{\ti}$.
To alleviate this problem, we take sufficiently many copies of points in $C_i$ and add them to the multi-set $S$.
In some sense, these copies act as proxy for points in $\ost{\ti}$ that are too close to $C_i$.
Finally, we argue that one of the subsets of $S$ ``simulates" a uniform sample from $\ost{\ti}$ and the mean of this subset provides a good approximation for the mean of $\ost{\ti}$.
The formal analysis follows.

We divide the points in $\ost{\ti}$ into two parts -- points which are close to a center in $C_i$, and the remaining points.
More formally, let the radius $R$ be given by

\begin{equation}\label{eqn:R}
R^2 = \frac{\veps^2}{41} \cdot \frac{\Phi_{C_i}(\ost{\ti})}{|\ost{\ti}|}
\end{equation}
Define $\ostn{\ti}$ as the points in $\ost{\ti}$ which are within distance $R$ of a center in $C_i$, and $\ostf{\ti}$ be the rest of the points in $\ost{\ti}$.
As in Case I, we define a new set $\ostp{\ti}$ where each point in $\ostn{\ti}$ is replaced by a copy of the corresponding point in $C_i$.
For a point $p \in \ostn{\ti}$, define $c(p)$ as the closest center in $C_i$ to $p$.
Now define a multi-set $\ostp{\ti}$ as $\ostf{\ti} \cup \{c(p): p \in \ostn{\ti}\}$.
Intuitively, $\ostp{\ti}$ denotes the set of points that are same as $\ost{\ti}$ except that points close to centers in $C_i$ have been ``collapsed" to these centers by taking appropriate number of copies.
Clearly, $|\ostp{\ti}| = |\ost{\ti}|$.
At a high level, we will argue that any center that provides a good $1$-means approximation for $\ostp{\ti}$ also provides a good approximation for
$\ost{\ti}$.
We will then focus on analyzing whether the invocation of {\bf Sample-centers} tries out  a good center for $\ostp{\ti}$.

We give some more notation. Let $m^\star$ and $m'$ denote the mean of $\ost{\ti}$ and $\ostp{\ti}$ respectively. Let $n^\star$ and $n$ denote
the size of the sets $\ost{\ti}$ and $\ostn{\ti}$ respectively. First, we show that $\Delta(\ost{\ti})$ is large with respect to $R$.

\begin{lemma}\label{lem:IIinter0}
 $\Delta(\ost{\ti}) = \Phi_{m^\star}(\ost{\ti}) \geq \frac{16 n}{\eps^2} R^2$.
\end{lemma}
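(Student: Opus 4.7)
The plan is to argue by contradiction. Assume for contradiction that $\Delta(\ost{\ti}) < \frac{16 n}{\eps^2} R^2$. The goal will be to show that if the points in $\ostn{\ti}$ are truly close to $C_i$ while $\Phi_{C_i}(\ost{\ti})$ is large (as forced by the definition of $R$), then the mean $m^\star$ must sit far away from $C_i$, which will in turn blow up $\Phi_{m^\star}(\ostn{\ti})$ beyond what the small-$\Delta$ assumption allows.

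First I would pin down how far $m^\star$ has to be from $C_i$. Let $c^\star$ denote the center of $C_i$ closest to $m^\star$. Since $\Phi_{C_i}(\ost{\ti}) = \sum_{p \in \ost{\ti}} \min_{c \in C_i} \|p-c\|^2 \leq \sum_{p \in \ost{\ti}}\|p-c^\star\|^2$, Fact~\ref{lem:folklore} gives
\[
\Phi_{C_i}(\ost{\ti}) \;\leq\; \Delta(\ost{\ti}) + n^\star \|c^\star - m^\star\|^2.
\]
Substituting $\Phi_{C_i}(\ost{\ti}) = \frac{41 n^\star R^2}{\eps^2}$ (by definition of $R$) and the assumed bound $\Delta(\ost{\ti}) < \frac{16 n R^2}{\eps^2} \leq \frac{16 n^\star R^2}{\eps^2}$ (using $n \leq n^\star$), I obtain $\|c^\star - m^\star\|^2 \geq \frac{25 R^2}{\eps^2}$, i.e.\ $\|c^\star - m^\star\| \geq \frac{5R}{\eps}$.

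Next I would transfer this to a lower bound on $\|p-m^\star\|$ for every $p \in \ostn{\ti}$. By definition of $\ostn{\ti}$ there is some $c(p) \in C_i$ with $\|p-c(p)\| \leq R$, and since $c^\star$ minimises distance to $m^\star$ over $C_i$, $\|m^\star - c(p)\| \geq \|m^\star - c^\star\| \geq \frac{5R}{\eps}$. The ordinary triangle inequality therefore yields
\[
\|p - m^\star\| \;\geq\; \|m^\star - c(p)\| - \|p - c(p)\| \;\geq\; \frac{5R}{\eps} - R \;\geq\; \frac{4R}{\eps},
\]
where the last step uses $\eps \leq 1$ (a standing "small enough" assumption). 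Squaring and summing over the $n$ points of $\ostn{\ti}$ gives $\Phi_{m^\star}(\ostn{\ti}) \geq \frac{16 n R^2}{\eps^2}$.

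Finally, since $\ostn{\ti} \subseteq \ost{\ti}$ we have $\Delta(\ost{\ti}) = \Phi_{m^\star}(\ost{\ti}) \geq \Phi_{m^\star}(\ostn{\ti}) \geq \frac{16 n R^2}{\eps^2}$, contradicting the hypothesis. The main delicacy, and the reason I chose Fact~\ref{lem:folklore} rather than the approximate triangle inequality (Fact~\ref{lem:triangle}) in the first step, is the constant: the factor-of-two loss in Fact~\ref{lem:triangle} would give only $\|c^\star - m^\star\|^2 \geq \frac{9 R^2}{2\eps^2}$, which is insufficient to push the final bound to $16/\eps^2$. Using the exact identity Fact~\ref{lem:folklore} is what makes the numerology $41 - 16 = 25$, $\sqrt{25} = 5$, $5 - 1 \geq 4$, $4^2 = 16$ close exactly.
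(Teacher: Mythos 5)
Your proof is correct and is essentially the paper's own argument recast as a proof by contradiction: the paper splits directly on whether $\|m^\star - c^\star\| \geq \frac{5R}{\eps}$, handling the ``close'' case via Fact~\ref{lem:folklore} exactly as in your first step, and the ``far'' case via the triangle inequality exactly as in your second. Same ingredients, same constants ($41 = 25 + 16$), just reorganized.
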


\begin{proof}
Let $c$ be the center in $C_i$ which is closest to $m^\star$. We divide the proof into two cases:
\begin{itemize}
\item[(i)] $||m^\star - c|| \geq \frac{5}{\eps} \cdot R$: For any point $p \in \ostn{\ti}$, triangle inequality implies that
$$||p-m^\star|| \geq ||c(p)-m^\star|| - ||c(p)-p|| \geq  \frac{5}{\eps} \cdot R - R \geq  \frac{4}{\eps} \cdot R.$$
Therefore, \lv{$$ \Delta(\ost{\ti}) \geq \sum_{p \in \ostn{\ti}} ||p-m^\star||^2 \geq \frac{16 n}{\eps^2} R^2.$$}
\sv{$ \Delta(\ost{\ti}) \geq \sum_{p \in \ostn{\ti}} ||p-m^\star||^2 \geq \frac{16 n}{\eps^2} R^2.$}
\item[(ii)]  $ ||m^\star - c|| < \frac{5}{\eps} \cdot R $: In this case, we have
\begin{eqnarray*}
\Phi_{m^\star}(\ost{\ti}) & \stackrel{\mbox{\small{Fact~\ref{lem:folklore}}}}{=} &
\Phi_{c} (\ost{\ti}) - n^{\star} \cdot ||m^\star - c||^2 \geq \Phi_{C_i} (\ost{\ti}) - n^{\star} \cdot ||m^\star - c||^2
\\ &  \stackrel{(\ref{eqn:R})}{\geq}& \frac{41 n^\star }{\eps^2} \cdot R^2 - \frac{25 n^\star}{\eps^2} \cdot R^2 \geq \frac{16 n}{\eps^2} R^2.
\end{eqnarray*}
\end{itemize}
This completes the proof of the lemma.
\qed
\end{proof}

\noindent
\sv{The proofs of the following two lemmas are similar to those of Lemma~\ref{lem:mm'} and Lemma~\ref{lem:inter1} respectively, and are deferred to the full version.}
\begin{lemma}\label{lem:IImm'}
$||m^\star - m'||^2 \leq  \frac{n}{n^\star} \cdot R^2$
\end{lemma}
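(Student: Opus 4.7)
The plan is to mimic the argument used in the proof of Lemma~\ref{lem:mm'}, but exploit the fact that only the ``near'' points contribute to the difference $m^\star - m'$. Concretely, by the definition of $\ostp{\ti}$, the points in $\ostf{\ti}$ appear unchanged in both $\ost{\ti}$ and $\ostp{\ti}$, while each $p \in \ostn{\ti}$ is replaced by $c(p) \in C_i$. Averaging over all $n^\star$ points therefore gives
\[
m^\star - m' \;=\; \frac{1}{n^\star} \sum_{p \in \ostn{\ti}} \bigl( p - c(p) \bigr),
\]
a sum of exactly $n$ vectors (one per point of $\ostn{\ti}$).

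The next step is to apply Cauchy-Schwarz to this sum of $n$ vectors, yielding
\[
\Bigl\| \sum_{p \in \ostn{\ti}} (p - c(p)) \Bigr\|^2 \;\le\; n \cdot \sum_{p \in \ostn{\ti}} \| p - c(p) \|^2.
\]
The key observation is now immediate from the definition of $\ostn{\ti}$: every $p \in \ostn{\ti}$ lies within distance $R$ of some center of $C_i$, so in particular $\|p - c(p)\| \leq R$. Hence $\sum_{p \in \ostn{\ti}} \| p - c(p) \|^2 \le n R^2$.

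Putting the two inequalities together gives $\|m^\star - m'\|^2 \le n^2 R^2 / (n^\star)^2$, and since $n \le n^\star$ (because $\ostn{\ti} \subseteq \ost{\ti}$), this is at most $\frac{n}{n^\star} R^2$, as required. I do not expect any serious obstacle here; the argument is a direct adaptation of Lemma~\ref{lem:mm'} with the sum restricted to $\ostn{\ti}$ and the pointwise bound $\|p - c(p)\| \le R$ replacing the general bound used there.
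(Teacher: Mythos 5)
Your proof is correct and is essentially identical to the paper's: the same reduction of $m^\star - m'$ to a sum over $\ostn{\ti}$ of the vectors $p - c(p)$, the same Cauchy--Schwarz step, the same pointwise bound $\|p - c(p)\| \le R$, and the same final use of $n \le n^\star$. Nothing to add.
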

\lv{
\begin{proof}
Since the only difference between $\ost{\ti}$ and $\ostp{\ti}$ are the points in $\ostn{\ti}$, we get
$$||m^\star - m'||^2 = \frac{1}{(n^\star)^2} \left| \left| \sum_{p \in \ostn{\ti}} (p-c(p)) \right|\right|^2 \leq
\frac{n}{(n^\star)^2} \sum_{p \in \ostn{\ti}} ||p-c(p)||^2 \leq \frac{n^2}{(n^\star)^2} R^2 \leq \frac{n}{n^\star} \cdot R^2.$$
where the first inequality follows from the Cauchy-Schwartz inequality.
\qed
\end{proof}
}

\lv{
\noindent
We now show that  $\Delta(\ostp{\ti})$ is close to $\Delta(\ost{\ti})$.}
\begin{lemma}\label{lem:IIinter1}
$\Delta(\ostp{\ti}) \leq 4  n  R^2 + 2 \cdot \Delta(\ost{\ti})$.
\end{lemma}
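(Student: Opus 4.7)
The plan is to upper-bound $\Delta(\ostp{\ti})$ by first replacing the ``best'' center $m'$ with $m^\star$ (at no cost, by Fact~\ref{lem:folklore}), then split the resulting sum into the part coming from $\ostf{\ti}$ (which is unchanged between $\ost{\ti}$ and $\ostp{\ti}$) and the part coming from the proxy points $\{c(p):p\in\ostn{\ti}\}$, and finally apply the approximate triangle inequality (Fact~\ref{lem:triangle}) pointwise on the proxy part together with the defining bound $\|p-c(p)\|\le R$ for $p\in\ostn{\ti}$.

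In more detail, first I would write
\[
\Delta(\ostp{\ti}) \;=\; \sum_{q\in\ostp{\ti}} \|q-m'\|^2 \;\le\; \sum_{q\in\ostp{\ti}} \|q-m^\star\|^2
\]
using Fact~\ref{lem:folklore} (the mean $m'$ is the minimizer of the $1$-means cost of $\ostp{\ti}$). Then I would decompose the right-hand side as
\[
\sum_{p\in\ostf{\ti}} \|p-m^\star\|^2 \;+\; \sum_{p\in\ostn{\ti}} \|c(p)-m^\star\|^2.
\]
For each $p\in\ostn{\ti}$, Fact~\ref{lem:triangle} gives $\|c(p)-m^\star\|^2 \le 2\|c(p)-p\|^2 + 2\|p-m^\star\|^2$, and by definition of $\ostn{\ti}$ we have $\|c(p)-p\|^2 \le R^2$. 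Summing over the $n=|\ostn{\ti}|$ such points contributes at most $2nR^2 + 2\sum_{p\in\ostn{\ti}}\|p-m^\star\|^2$. Combining the two parts yields
\[
\Delta(\ostp{\ti}) \;\le\; 2nR^2 + 2\sum_{p\in\ost{\ti}} \|p-m^\star\|^2 \;=\; 2nR^2 + 2\,\Delta(\ost{\ti}),
\]
which is in fact slightly stronger than (and therefore implies) the claimed bound $4nR^2 + 2\Delta(\ost{\ti})$.

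There is no real obstacle here; the only subtle point is remembering that $\ostp{\ti}$ has the same cardinality as $\ost{\ti}$ and that the substituted points are exactly in bijection with $\ostn{\ti}$, so the sum over $\ostp{\ti}$ cleanly splits into an $\ostf{\ti}$ part (identical to the corresponding $\ost{\ti}$ part) and a proxy part of size $n$, which is what makes the $2nR^2$ overhead appear. This is directly parallel to the argument of Lemma~\ref{lem:inter1} in Case~I, except that now only the subset $\ostn{\ti}$ (rather than all of $\ost{\ti}$) is collapsed, which is why the $R^2$ bound on $\|p-c(p)\|^2$ replaces the $\Phi_{C_i}(\ost{\ti})$ term that appeared there.
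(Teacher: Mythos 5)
Your proof is correct, but it takes a slightly different route from the paper's. You switch from $m'$ to $m^\star$ at the very start, using the fact that the mean $m'$ minimizes the $1$-means cost of $\ostp{\ti}$ (a consequence of Fact~\ref{lem:folklore}), and then apply Fact~\ref{lem:triangle} only on the collapsed points; this yields $2nR^2 + 2\Delta(\ost{\ti})$ directly and never needs Lemma~\ref{lem:IImm'}. The paper instead keeps $m'$ throughout: it applies the triangle inequality to arrive at $2nR^2 + 2\Phi_{m'}(\ost{\ti})$, then invokes Fact~\ref{lem:folklore} to write $\Phi_{m'}(\ost{\ti}) = \Delta(\ost{\ti}) + n^\star\|m'-m^\star\|^2$ and Lemma~\ref{lem:IImm'} to bound the second term, paying an extra $2nR^2$ and landing at the stated $4nR^2 + 2\Delta(\ost{\ti})$. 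Your version is the exact analogue of the paper's proof of Lemma~\ref{lem:inter1} in Case~I (as you observe), is marginally shorter, and gives a better constant; since the lemma is only used with generous slack in Lemma~\ref{lem:II-final}, the improvement is immaterial to the final result, but there is nothing to object to in your argument.
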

\lv{
\begin{proof}
The lemma follows from the following sequence of inequalities:
\begin{eqnarray*}
\Delta(\ostp{\ti}) & = & \sum_{p \in \ostn{\ti}} ||c(p)-m'||^2  + \sum_{p \in \ostf{\ti}} ||p-m'||^2 \\
& \stackrel{\mbox{\small{Fact~\ref{lem:triangle}}}}{\leq} & \sum_{p \in \ostn{\ti}} 2(||c(p)-p||^2 + ||p-m'||^2) + \sum_{p \in \ostf{\ti}} ||p-m'||^2\\
& \leq & 2nR^2 + 2 \sum_{p \in \ost{\ti}} ||p-m'||^2 = 2nR^2 + 2 \cdot \Phi_{m'}(\ost{\ti}) \\
& \stackrel{\mbox{\small{Fact~\ref{lem:folklore}}}}{=} & 2nR^2 + 2 \cdot \left( \Delta(\ost{\ti}) + n^\star \cdot ||m' - m^{\star}||^2\right) \\
& \stackrel{\mbox{\small{Lemma~\ref{lem:IImm'}}}}{\leq} & 4nR^2 + 2 \cdot \Delta(\ost{\ti}).
\end{eqnarray*}
This completes the proof of the lemma.
\qed
\end{proof}

}

\noindent
We now argue that any center that is good for $\ostp{\ti}$ is also good for $\ost{\ti}$.
\begin{lemma}
\label{lem:II-final}
Let $m''$ be  such that $\Phi_{m''}(\ostp{\ti}) \leq \left(1 + \frac{\eps}{16} \right) \cdot \Delta(\ostp{\ti})$.
Then $\Phi_{m''}(\ost{\ti}) \leq \left(1 + \frac{\eps}{2} \right) \cdot \Delta(\ost{\ti}) .$
\end{lemma}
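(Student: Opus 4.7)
The plan is to follow the same skeleton used for Lemma~\ref{lem:inter2}, but replace the coarse $\Phi_{C_i}(\ost{\ti})$ bound by the sharper radius bound available in Case II. First I would expand
$$\Phi_{m''}(\ost{\ti}) = \Delta(\ost{\ti}) + n^\star \cdot \|m^\star - m''\|^2$$
using Fact~\ref{lem:folklore}, so the task reduces to bounding the displacement $\|m^\star - m''\|^2$ between $m^\star$ and the approximate $1$-means center $m''$ of $\ostp{\ti}$.

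Next I would split this displacement through $m'$ using the approximate triangle inequality (Fact~\ref{lem:triangle}): $\|m^\star - m''\|^2 \leq 2\|m^\star - m'\|^2 + 2\|m' - m''\|^2$. Lemma~\ref{lem:IImm'} handles the first term, giving $2n^\star \cdot \|m^\star - m'\|^2 \leq 2nR^2$. For the second term, I would apply Fact~\ref{lem:folklore} to $\ostp{\ti}$ (noting $|\ostp{\ti}| = n^\star$) so that $n^\star \|m' - m''\|^2 = \Phi_{m''}(\ostp{\ti}) - \Delta(\ostp{\ti})$; the hypothesis on $m''$ then bounds this by $(\eps/16)\Delta(\ostp{\ti})$. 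Combining gives
$$\Phi_{m''}(\ost{\ti}) \leq \Delta(\ost{\ti}) + 2nR^2 + \frac{\eps}{8}\Delta(\ostp{\ti}).$$

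Now I would substitute the bound on $\Delta(\ostp{\ti})$ from Lemma~\ref{lem:IIinter1} to get $\Phi_{m''}(\ost{\ti}) \leq (1 + \eps/4)\Delta(\ost{\ti}) + (2 + \eps/2)nR^2$. The final step is to absorb the $nR^2$ term into $\Delta(\ost{\ti})$ via Lemma~\ref{lem:IIinter0}, which (after rearranging) asserts $nR^2 \leq (\eps^2/16)\Delta(\ost{\ti})$. For small enough $\eps$, $(2 + \eps/2)\cdot \eps^2/16 \leq \eps/4$, yielding the desired $(1 + \eps/2)\Delta(\ost{\ti})$ bound.

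The main thing to watch out for is simply keeping the constants consistent: Lemma~\ref{lem:IIinter0} was calibrated with its factor of $16/\eps^2$ precisely so that $nR^2$ is negligible compared to $\Delta(\ost{\ti})$ at the $\eps$-scale we need, and the $\eps/16$ slack in the hypothesis on $m''$ was chosen so that the factor $\eps/8$ in front of $\Delta(\ostp{\ti})$ (after doubling by the triangle step) leaves room to accommodate the factor $2$ inflation when $\Delta(\ostp{\ti})$ is compared to $\Delta(\ost{\ti})$ in Lemma~\ref{lem:IIinter1}. There is no conceptual obstacle; the argument is a direct chase through Facts~\ref{lem:folklore}--\ref{lem:triangle} and Lemmas~\ref{lem:IIinter0}--\ref{lem:IIinter1}.
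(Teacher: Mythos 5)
Your proposal is correct and follows essentially the same route as the paper's proof: expand via Fact~\ref{lem:folklore}, split $\|m^\star - m''\|^2$ through $m'$ with Fact~\ref{lem:triangle}, bound the two pieces via Lemma~\ref{lem:IImm'} and Fact~\ref{lem:folklore} applied to $\ostp{\ti}$, and then absorb the $nR^2$ terms using Lemmas~\ref{lem:IIinter1} and~\ref{lem:IIinter0}. The constant-tracking you describe matches the paper's calculation exactly.
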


\begin{proof}
The lemma follows from the following inequalities:
\begin{eqnarray*}
\Phi_{m''}(\ost{\ti}) &=& \sum_{p \in \ost{\ti}} ||m''-p||^2 \stackrel{\mbox{\small{Fact~\ref{lem:folklore}}}}{=}
 \sum_{p \in \ost{\ti}} ||m^\star-p||^2 +  n^\star \cdot ||m^\star-m''||^2 \\
& \hspace*{-1cm} \stackrel{\mbox{\small{Fact~\ref{lem:triangle}}}}{\leq} &
 \Delta(\ost{\ti}) + 2n^\star \left( ||m^\star - m'||^2 + ||m'-m''||^2 \right)  \stackrel{\mbox{\small{Lemma~\ref{lem:IImm'}}}}{\leq}
 \Delta(\ost{\ti}) + 2n R^2 + 2n^\star \cdot ||m'-m''||^2 \\
 & \hspace*{-1cm} \stackrel{\mbox{\small{Fact~\ref{lem:folklore}}}}{\leq} &
 \Delta(\ost{\ti}) + 2nR^2 + 2 \cdot \left( \Phi_{m''}(\ostp{\ti}) - \Delta(\ostp{\ti})\right)
 \ \leq \ \Delta(\ost{\ti}) + 2 nR^2 + \frac{\eps}{8} \cdot \Delta(\ostp{\ti}) \\
& \hspace*{-1cm} \stackrel{\mbox{\small{Lemma~\ref{lem:IIinter1}}}}{\leq} & \Delta(\ost{\ti}) + 2 nR^2 + \frac{\eps}{2} \cdot nR^2 + \frac{\eps}{4} \cdot \Delta(\ost{\ti})
  \stackrel{\mbox{\small{Lemma~\ref{lem:IIinter0}}}}{\leq} \left(1 + \frac{\eps}{2} \right) \cdot \Delta(\ost{\ti}).
\end{eqnarray*}
This completes the proof of the lemma.
\qed
\end{proof}

Given the above lemma, all we need to argue is that our algorithm indeed considers a center $m''$ such that $\Phi_{m''}(\ostp{\ti}) \leq (1+\veps/16) \cdot \Delta(\ostp{\ti})$. For this we would need about $O(1/\eps)$ uniform samples from $\ostp{\ti}$. However, our algorithm can only sample using $D^2$-sampling w.r.t. $C_i$. For ease of notation, let $\ostpn{\ti}$ denote the multi-set $\{c(p): p \in \ostn{\ti}\}$.
Recall that $\ostp{\ti}$ consists of $\ostf{\ti}$ and $\ostpn{\ti}$.
The first observation is that the probability of sampling an element from $\ostf{\ti}$ is reasonably large (proportional to $\eps/k$). Using this fact, we
show how to sample from $\ostp{\ti}$ (almost uniformly). Finally, we show how to convert this almost uniform sampling to uniform sampling (at the cost
of increasing the size of sample). \sv{We defer the proof of the following lemma to the full version.}

\begin{lemma}
\label{lem:osample}
Let $x$ be a sample from $D^2$-sampling w.r.t. $C_i$.
Then, $\pr[x \in \ostf{\ti}] \geq \frac{\eps}{15k}$.
Further, for any point $p \in \ostf{\ti}$, $\pr[x=p] \geq \frac{\gamma}{|\ost{\ti}|}$, where $\gamma$ denotes $\frac{\veps^2}{533 k}$.
\end{lemma}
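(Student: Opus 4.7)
The plan is to derive both bounds as short consequences of the definition of $R$ in~(\ref{eqn:R}) together with the Case~II hypothesis $\Phi_{C_i}(\ost{\ti})/\Phi_{C_i}(X) \geq \eps/(13k)$. Morally, the radius $R$ was chosen so that the total $D^2$-mass of the ``near'' part $\ostn{\ti}$ is at most an $O(\eps^2)$ fraction of $\Phi_{C_i}(\ost{\ti})$, while each individual ``far'' point in $\ostf{\ti}$ contributes at least $R^2$ to $\Phi_{C_i}(X)$; both bullets then reduce to one-line computations.

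For the first bullet, I would first control $\Phi_{C_i}(\ostn{\ti})$. Every $p \in \ostn{\ti}$ satisfies $\min_{c \in C_i}\|p-c\|^2 \leq R^2$ by definition, so $\Phi_{C_i}(\ostn{\ti}) \leq n\,R^2 \leq |\ost{\ti}|\,R^2 = \tfrac{\eps^2}{41}\,\Phi_{C_i}(\ost{\ti})$ by the choice of $R$. Hence $\Phi_{C_i}(\ostf{\ti}) = \Phi_{C_i}(\ost{\ti}) - \Phi_{C_i}(\ostn{\ti}) \geq (1-\eps^2/41)\,\Phi_{C_i}(\ost{\ti})$. Dividing by $\Phi_{C_i}(X)$ and invoking Case~II,
\[
\pr[x \in \ostf{\ti}] \;=\; \frac{\Phi_{C_i}(\ostf{\ti})}{\Phi_{C_i}(X)} \;\geq\; \left(1-\frac{\eps^2}{41}\right)\cdot\frac{\eps}{13k} \;\geq\; \frac{\eps}{15k},
\]
the last inequality being valid for every $\eps$ in the relevant range (certainly for $\eps \leq 1$, where $(1-\eps^2/41)/13 \geq 40/(41\cdot 13) \geq 1/15$).

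For the second bullet, fix $p \in \ostf{\ti}$. Since $p$ lies outside the ball of radius $R$ around every center of $C_i$, one has $\min_{c \in C_i}\|p-c\|^2 > R^2$, and therefore
\[
\pr[x=p] \;=\; \frac{\min_{c \in C_i}\|p-c\|^2}{\Phi_{C_i}(X)} \;>\; \frac{R^2}{\Phi_{C_i}(X)} \;=\; \frac{\eps^2}{41\,|\ost{\ti}|}\cdot\frac{\Phi_{C_i}(\ost{\ti})}{\Phi_{C_i}(X)}.
\]
A second application of Case~II converts this into the required $\gamma/|\ost{\ti}|$ bound, with the constant $533 = 13 \cdot 41$ arising from combining the two denominators.

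I do not foresee a genuine obstacle here: once the partition $\ost{\ti} = \ostn{\ti} \cup \ostf{\ti}$ and the definition of $R$ are on the page, each part is essentially a one-line calculation. The only thing to watch is the bookkeeping of constants, which was arranged precisely so that the $\eps^2/41$ factor in $R^2$ combines with the $\eps/(13k)$ factor of Case~II to yield the advertised $\eps/(15k)$ and $\gamma/|\ost{\ti}|$ thresholds.
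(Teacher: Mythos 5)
Your argument is essentially identical to the paper's: the first bullet is obtained by bounding the $D^2$-mass of $\ostn{\ti}$ by $|\ost{\ti}|\,R^2=\tfrac{\eps^2}{41}\Phi_{C_i}(\ost{\ti})$ and subtracting, and the second by combining $\Phi_{C_i}(\{p\})\geq R^2$ with the Case~II hypothesis, exactly as in the paper. One bookkeeping remark (which applies verbatim to the paper's own last line as well): combining the two denominators actually yields $\tfrac{\eps}{13k}\cdot\tfrac{\eps^2}{41}=\tfrac{\eps^3}{533k}$, not $\tfrac{\eps^2}{533k}$, so the stated $\gamma$ is off by one power of $\eps$ --- a shared typo rather than a gap in your reasoning.
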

\lv{
\begin{proof}
Note that $\sum_{p \in \ost{\ti} \setminus \ostf{\ti}} \pr[x=p] \leq \frac{R^2}{\Phi_{C_i}(X)} \cdot |\ost{\ti}| \leq \frac{\eps^2}{41} \frac{\Phi_{C_i}(\ost{\ti})} {\Phi_{C_i}(X)}$.
Therefore, the fact that we are in case~II implies that
$$\pr[x \in \ostf{\ti}] \geq \Pr[x \in \ost{\ti}] - \pr[x \in \ost{\ti} \setminus \ostf{\ti}] \geq \frac{\Phi_{C_i}(\ost{\ti})}{\Phi_{C_i}(X)} - \frac{\eps^2}{41} \frac{\Phi_{C_i}(\ost{\ti})} {\Phi_{C_i}(X)} \geq \frac{\eps}{15k}.$$

\noindent
Also, if $x \in \ostf{\ti}$, then $\Phi_{C_i}(\{x\}) \geq R^2=\frac{\eps^2}{41} \cdot \frac{\Phi_{C_i}(\ost{\ti})}{|\ost{\ti}|}$.
Therefore,
$$\frac{\Phi_{C_i}(\{x\})}{\Phi_{C_i}(X)}  \geq \frac{\eps}{13k} \cdot \frac{R^2}{\Phi_{C_i}(\ost{\ti})} \geq \frac{\veps}{13k} \cdot \frac{\veps^2}{41} \cdot \frac{1}{|\ost{\ti}|} \geq \frac{\veps^2}{533 k} \cdot \frac{1}{|\ost{\ti}|}.
$$
This completes the proof of the lemma.
\qed
\end{proof}
}

Let $X_1, \ldots X_l$ be $l$ points sampled independently using $D^2$-sampling w.r.t. $C_i$.
We construct a new set of random variables $Y_1, \ldots, Y_l$.
Each variable $Y_u$ will depend on $X_u$ only, and will take values either in $\ostp{\ti}$ or will be $\nl$.
These variables are defined as follows: if $X_u \notin \ostf{\ti}$, we set $Y_u$ to  $\nl$. Otherwise, we assign $Y_u$ to one of the following random variables with equal probability:
(i) $X_u$ or (ii) a random element of the multi-set $\ostpn{\ti}$.
The following observation follows from Lemma~\ref{lem:osample}\lv{.}\sv{, and its proof is deferred to the full version.}

\begin{corollary}
\label{cor:osample}
For a fixed index $u$, and an element $x \in \ostp{\ti}$, $\pr[Y_u=x] \geq \frac{\gamma'}{|\ostp{\ti}|},$ where $\gamma'=\gamma/2$.
\end{corollary}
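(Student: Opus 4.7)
The plan is to bound $\pr[Y_u = x]$ by splitting into the two natural cases depending on whether $x$ is drawn from $\ostf{\ti}$ or from the "collapsed" part $\ostpn{\ti}$, and using Lemma~\ref{lem:osample} in each case. Recall that by construction, $|\ostp{\ti}| = |\ost{\ti}|$, and $\ostp{\ti}$ is the disjoint (multiset) union of $\ostf{\ti}$ and $\ostpn{\ti}$.

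First I would handle the case $x \in \ostf{\ti}$. Here, $Y_u = x$ happens precisely when $X_u = x$ (which automatically guarantees $X_u \in \ostf{\ti}$, so $Y_u \neq \nl$) and we then select branch~(i) in the definition of $Y_u$, which occurs with probability $1/2$. By Lemma~\ref{lem:osample}, $\pr[X_u = x] \geq \gamma/|\ost{\ti}|$, so $\pr[Y_u = x] \geq \gamma/(2|\ost{\ti}|) = \gamma'/|\ostp{\ti}|$, as desired.

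Next I would handle the case $x \in \ostpn{\ti}$ (viewed as a fixed element of this multiset). Here, $Y_u = x$ occurs when $X_u$ lands in $\ostf{\ti}$ (probability at least $\veps/(15k)$ by Lemma~\ref{lem:osample}), we then choose branch~(ii) with probability $1/2$, and the uniform draw from $\ostpn{\ti}$ picks this particular copy, with probability $1/|\ostpn{\ti}|$. Since $|\ostpn{\ti}| \leq |\ostp{\ti}|$, multiplying these gives $\pr[Y_u = x] \geq \veps/(30k|\ostp{\ti}|)$, which is at least $\gamma'/|\ostp{\ti}| = \veps^2/(1066 k |\ostp{\ti}|)$ because $\veps$ is a small constant (concretely, whenever $\veps \leq 1066/30$). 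Combining the two cases yields the claim.

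I do not anticipate any genuine obstacle here: everything is a direct unpacking of the definition of $Y_u$ together with the two bounds in Lemma~\ref{lem:osample}. The only mildly delicate point is making sure the multiset bookkeeping in the $\ostpn{\ti}$ case is handled correctly, so that the factor $1/|\ostpn{\ti}|$ from the uniform draw is compared against $1/|\ostp{\ti}|$ using the trivial bound $|\ostpn{\ti}| \leq |\ostp{\ti}|$.
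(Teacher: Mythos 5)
Your proof is correct and follows essentially the same two-case argument as the paper: for $x \in \ostf{\ti}$ you combine $\pr[X_u=x]\geq \gamma/|\ost{\ti}|$ with the factor $1/2$ for choosing branch (i), and for $x \in \ostpn{\ti}$ you combine $\pr[X_u \in \ostf{\ti}]\geq \eps/(15k)$ with the factor $\frac{1}{2|\ostpn{\ti}|}$ and the bound $|\ostpn{\ti}|\leq|\ostp{\ti}|$, exactly as in the paper. No issues.
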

\lv{
\begin{proof}
If $x \in \ostf{\ti}$, then we know from Lemma~\ref{lem:osample} that $X_u$ is $x$ with probability at least $\frac{\gamma}{|\ostp{\ti}|}$ (note that
$\ostp{\ti}$ and $\ost{\ti}$ have the same cardinality). Conditioned on this event, $Y_u$ will be equal to $X_u$ with probability $1/2$.
Now suppose $x \in \ostpn{\ti}$. Lemma~\ref{lem:osample} implies that $X_u$ is an element of $\ostf{\ti}$ with probability at least $\frac{\eps}{15k}$.
Conditioned on this event, $Y_u$ will be equal to $x$ with probability at least $\frac{1}{2} \cdot \frac{1}{|\ostpn{\ti}|}$. Therefore,
the probability that $X_u$ is equal to $x$ is at least $\frac{\eps}{15k} \cdot \frac{1}{2|\ostpn{\ti}|} \geq \frac{\eps}{30k |\ostp{\ti}|} \geq \frac{\gamma'}{|\ostp{\ti}|}$.
\qed
\end{proof}
}

Corollary~\ref{cor:osample} shows that we can obtain samples from $\ostp{\ti}$ which are nearly uniform (up to a constant factor).
To convert this to a set of uniform samples, we use the idea of~\cite{jks}.
For an element $x \in \ostp{\ti}$, let $\gamma_x$ be such that $\frac{\gamma_x}{|\ostp{\ti}|}$ denotes the probability that the random variable $Y_u$ is equal to $x$ (note that this is independent of $u$).
Corollary~\ref{cor:osample} implies that $\gamma_x \geq \gamma'$.
We define a new set of independent random variables $Z_1, \ldots, Z_l$.
The random variable $Z_u$ will depend on $Y_u$ only.
If $Y_u$ is $\nl$, $Z_u$ is also $\nl$.
If $Y_u$ is equal to $x \in \ostp{\ti}$, then $Z_u$ takes the value $x$ with probability $\frac{\gamma'}{\gamma_x}$, and $\nl$ with the remaining probability.
\lv{Note that $Z_u$ is either $\nl$ or one of the elements of $\ostp{\ti}$.
Further, conditioned on the latter event, it is a uniform sample from $\ostp{\ti}$.}
We can now prove the key lemma\lv{.}\sv{(proof is deferred to the full version).}

\begin{lemma}
\label{lem:final}
Let $l$ be $\frac{128}{\gamma' \cdot \eps}$, and $m''$ denote the mean of the non-null samples from $Z_1, \ldots, Z_l$. Then, with probability at least $1/2$,
$\Phi_{m''}(\ostp{\ti}) \leq (1+\eps/16) \cdot \Delta(\ostp{\ti})$.
\end{lemma}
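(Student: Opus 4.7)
The plan is to show two things: first, that the nonnull $Z_u$'s behave as independent uniform samples from $\ostp{\ti}$; and second, that with high probability we accumulate enough such samples to invoke Inaba's lemma (Lemma~\ref{lemma:inaba}) with the right parameters.

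First I would verify the uniformity claim. For any fixed $x \in \ostp{\ti}$, unrolling the construction gives
\[
\pr[Z_u = x] \ =\ \pr[Y_u = x] \cdot \frac{\gamma'}{\gamma_x} \ =\ \frac{\gamma_x}{|\ostp{\ti}|} \cdot \frac{\gamma'}{\gamma_x} \ =\ \frac{\gamma'}{|\ostp{\ti}|},
\]
so the conditional distribution of $Z_u$ given $Z_u \neq \nl$ is exactly uniform on $\ostp{\ti}$, and $\pr[Z_u \neq \nl] = \gamma'$. Moreover, the $Z_u$'s are independent since each depends only on $Y_u$ (and hence only on $X_u$). Thus if $K$ denotes the number of nonnull $Z_u$'s, conditioned on $K = t$ the nonnull samples form $t$ i.i.d.\ uniform draws from $\ostp{\ti}$.

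Next I would control $K$. Since $K$ is a sum of $l$ independent Bernoullis with mean $\gamma'$, $\ex[K] = l \gamma' = 128/\eps$. A standard Chernoff bound then yields
\[
\pr\!\left[K < 64/\eps \right] \ \leq\ \exp\!\left(-\tfrac{1}{8} \cdot 128/\eps\right) \ =\ e^{-16/\eps},
\]
which is well below $1/4$ for all small enough $\eps$. Set $M := 64/\eps$ and condition on the event $\{K \geq M\}$; then the first $M$ nonnull samples are i.i.d.\ uniform draws from $\ostp{\ti}$.

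Finally I would apply Lemma~\ref{lemma:inaba} to these $M$ uniform samples with $\delta = 1/4$. The lemma guarantees that with probability at least $3/4$, the mean $m''$ of these samples (which is also the mean of \emph{all} nonnull $Z_u$'s, since adding more uniform samples does not change the sample mean in expectation --- though one has to be slightly careful here, see below) satisfies
\[
\Phi_{m''}(\ostp{\ti}) \ \leq\ \left(1 + \frac{1}{\delta M}\right) \Delta(\ostp{\ti}) \ =\ \left(1 + \frac{\eps}{16}\right) \Delta(\ostp{\ti}).
\]
A union bound over the failure events then gives success probability at least $1 - e^{-16/\eps} - 1/4 \geq 1/2$.

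The one technical subtlety to handle carefully is that the lemma gives us the claim for the mean of a \emph{fixed-size} uniform sample, whereas $m''$ is the mean of a random-size subset of size $K$. The cleanest fix is to argue conditionally: condition on each value $t \geq M$ of $K$, and apply Inaba's lemma with sample size $t$; since $1/(\delta t) \leq 1/(\delta M) = \eps/16$ for all $t \geq M$, the $(1 + \eps/16)$ bound holds for every such $t$, and integrating against the conditional distribution of $K$ gives the desired statement. This is the only place where a little care is needed; the rest is a direct application of Inaba's lemma plus a Chernoff bound.
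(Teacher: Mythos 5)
Your proposal is correct and follows essentially the same route as the paper: show that conditioned on the set of non-null indices the surviving $Z_u$'s are i.i.d.\ uniform on $\ostp{\ti}$, lower-bound the number of non-null samples by $64/\eps$ using concentration around the mean $\gamma' l = 128/\eps$, and then apply Lemma~\ref{lemma:inaba} with $\delta = 1/4$ plus a union bound. Your explicit Chernoff bound and the careful conditioning on $K=t$ are exactly the (slightly more terse) coin-toss decoupling argument in the paper, so there is nothing to add.
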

\lv{
\begin{proof}
Note that a random variable $Z_u$ is equal to a specific element of $\ostp{\ti}$ with probability equal to $\frac{\gamma'}{|\ostp{\ti}|}$.
Therefore, it takes $\nl$ value with probability $1-\gamma'$.
Now consider a different set of iid random variables $Z_u'$, $1 \leq u \leq l$ as follows: each $Z_u$ tosses a coin with probability of Heads being $\gamma'$.
If we get Heads, it gets value $\nl$, otherwise it is equal to a random element of $\ostp{\ti}$. It is easy to check that the joint distribution of the random variables $Z_u'$ is identical to that
of the random variables $Z_u$.
Thus, it suffices to prove the statement of the lemma for the random variables $Z_u'$.

Now we condition on the coin tosses of the random variables $Z_u'$.
Let $n'$ be the number of the number of random variables which are not $\nl$.
($n'$ is a deterministic quantity because we have conditioned on the coin tosses).
Let $m''$ be the mean of such non-$\nl$ variables among $Z_1', \ldots, Z_l'$.
If $m''$ happens to be larger than $64/\eps$, Lemma~\ref{lemma:inaba} implies that with probability at least $3/4$,
$\Phi_{m''}(\ostp{\ti}) \leq (1+\eps/16) \cdot \Delta(\ostp{\ti})$.

Finally, observe that the expected number of non-$\nl$ random variables is $\gamma' \cdot l \geq 128/\eps$.
Therefore, with probability at least $3/4$, the number of non-$\nl$ elements will be at least $64/\eps$.
\qed
\end{proof}
}

Let $C_i^{(l)}$ denote the multi-set obtained by taking $l$ copies of each of the centers in $C_i$. Now observe that all the non-$\nl$
elements among $Y_1, \ldots, Y_l$ are elements of $\{X_1, \ldots, X_l\} \cup C_i^{(l)}$, and so the same must hold for
$Z_1, \ldots, Z_l$. This implies that in Step 2(d) of the algorithm {\bf Sample-centers}, we would have tried adding the point $m''$ as
described in Lemma~\ref{lem:final}. Therefore, the induction hypothesis continues to hold with probability at least 1/2.
This concludes the proof of Theorem~\ref{thm:upperbound}.

\lv{
\section{Lower Bound}
\label{sec:lower}
\newcommand{\coord}[2]{(#1)_{#2}}

In this section, we prove the lower bound result Theorem~\ref{thm:lowerbound}. Consider parameters $k$ and $\eps$ (assume $\eps$ is a
small enough constant).
We first define the set of points $X$. Let $m$ denote  $\lceil \frac{1}{\sqrt{\veps}}\rceil$. The points will belong to
$\mathbb{R}^d$, where $d = km$. The set $X$ will have $d$ points, namely, $e_1, \ldots, e_d$, where $e_i$ denotes the vector which
has all coordinates 0, except for the $i^{th}$ coordinate, which is 1. Now, we define the set $\ctd$ of clusterings of $X$. The set $\ctd$
will consist of those clusterings $\bO = \{O_1, \ldots, O_k\}$ for which each of the clusters has exactly $m$ points. Observe that
\begin{eqnarray}
\label{eq:count}
|\ctd| = \frac{(km)!}{(m!)^{k}}
\end{eqnarray}

Now fix a set $C$ of $k$ centers, $c_1, \ldots, c_k$. We will now upper bound the number of clusterings $\bO \in \ctd$ for which
\begin{eqnarray}
\label{eq:lower1}
\cost_C(\bO) \leq (1+\eps) \opt_k(\bO).
\end{eqnarray}
Let $\bO=\{O_1, \ldots, O_k\}$ be as above. Note that
\begin{eqnarray}
\label{eq:count1}
\opt_k(\bO)  = \sum_{i=1}^k \Delta(O_i) = km \cdot \left( (1 - 1/m)^2 + (m-1) \cdot 1/m^2\right) = k(m-1)
\end{eqnarray}
Recall that $\cost_C(\bO)$ is obtained by assigning each cluster in $\bO$ to a unique center in $C$, and then by computing the sum of square
of distances of points in $X$ to the corresponding centers. Wlog we rearrange the clusters in $\bO$ such  that the points in $O_j$ are assigned to
$c_j$. For a vector $v$, we shall use $\coord{v}{j}$ to denote the $j^{th}$ coordinate of $v$.
For every center $c_r$ we define a corresponding vector $v_r$ as follows:
$$\coord{v_r}{j} =
\left\{
\begin{array}{cc} \coord{c_r}{j} & {\mbox{ if $e_j \notin O_r$}} \\
 \coord{c_r}{j} - \frac{1}{m} &
{\mbox{ otherwise}}  \end{array}
\right.$$
\begin{lemma}
\label{lem:lower1}
$\sum_{r=1}^k ||v_r||^2 \leq \frac{k}{m(m-1)}$.
\end{lemma}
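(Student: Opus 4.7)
The plan is to express $\cost_C(\bO)$ in terms of $\sum_r \|v_r\|^2$ and then invoke the hypothesis~(\ref{eq:lower1}) together with the identity~(\ref{eq:count1}) for $\opt_k(\bO)$. The point of the definition of $v_r$ is that it is, up to a shift, the ``residual'' of $c_r$ after subtracting the characteristic vector of $O_r$ scaled by $1/m$; in other words $v_r = c_r - \frac{1}{m}\sum_{e_j \in O_r} e_j$, which is exactly $c_r$ minus the mean of $O_r$. So $\|v_r\|^2$ measures how far $c_r$ is from the optimal $1$-means center of $O_r$, and the whole calculation is essentially Fact~\ref{lem:folklore} applied cluster by cluster.

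Concretely, I would first compute, for each $r$, $\sum_{e_j \in O_r} \|e_j - c_r\|^2$ by expanding coordinate by coordinate. Using $\|e_j - c_r\|^2 = \|c_r\|^2 - 2\coord{c_r}{j} + 1$ and summing over the $m$ indices $j$ with $e_j \in O_r$, one gets $m\|c_r\|^2 - 2\sum_{e_j \in O_r}\coord{c_r}{j} + m$. A parallel computation of $\|v_r\|^2$ coordinate by coordinate gives $\|v_r\|^2 = \|c_r\|^2 - \frac{2}{m}\sum_{e_j\in O_r}\coord{c_r}{j} + \frac{1}{m}$. Comparing the two yields the clean identity
\begin{equation*}
\sum_{e_j \in O_r} \|e_j - c_r\|^2 \;=\; m\|v_r\|^2 + (m-1).
\end{equation*}
(This is just Fact~\ref{lem:folklore} in disguise, since $\Delta(O_r) = m-1$ and $v_r$ is the offset of $c_r$ from $\Gamma(O_r)$.)

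Summing over $r$ and using~(\ref{eq:count1}), which says $\opt_k(\bO) = k(m-1)$, gives
\begin{equation*}
\cost_C(\bO) \;=\; m\sum_{r=1}^{k}\|v_r\|^2 \;+\; k(m-1) \;=\; m\sum_{r=1}^{k}\|v_r\|^2 \;+\; \opt_k(\bO).
\end{equation*}
Plugging this into the assumed bound~(\ref{eq:lower1}) $\cost_C(\bO) \leq (1+\eps)\opt_k(\bO)$ immediately yields
\begin{equation*}
\sum_{r=1}^{k}\|v_r\|^2 \;\leq\; \frac{\eps \cdot \opt_k(\bO)}{m} \;=\; \frac{\eps k (m-1)}{m}.
\end{equation*}

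The final step is to convert the $\eps$ on the right into the claimed $\frac{1}{m-1}$. By definition $m = \lceil 1/\sqrt{\veps}\rceil$, so $m-1 < 1/\sqrt{\veps}$, hence $\eps(m-1)^2 \leq 1$, hence $\eps(m-1) \leq 1/(m-1)$, so the bound becomes $\sum_r \|v_r\|^2 \leq \frac{k}{m(m-1)}$ as required. There is no genuine obstacle here; the only thing to be careful about is the coordinate-by-coordinate expansion that produces the identity relating $\cost_C(\bO)$, $\|v_r\|^2$ and $\opt_k(\bO)$, and this is routine.
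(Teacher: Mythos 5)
Your proof is correct and follows essentially the same route as the paper: the key identity $\cost_C(\bO) = \opt_k(\bO) + m\sum_{r}\|v_r\|^2$ is exactly what the paper derives (there via Fact~\ref{lem:folklore}, here by an equivalent coordinate-wise expansion, as you yourself note), and the final step using $m = \lceil 1/\sqrt{\veps}\rceil$ to get $\eps(m-1) \leq 1/(m-1)$ matches the paper's conclusion.
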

\begin{proof}
Fix a cluster $O_r$. Let $m_r$ denote the mean of $O_r$. Note that $\coord{m_r}{j}$ is $1/m$ if $e_j \in O_r$, $0$ otherwise. We now
simplify the expression $\cost_C(\bO)$ as follows:
\begin{eqnarray*}
\cost_C(\bO) & = & \sum_{r=1}^k \sum_{e_j \in O_r} ||e_j - c_r||^2 \stackrel{\mbox{\small{Fact~\ref{lem:folklore}}}}{=}
\sum_{r=1}^k \sum_{e_j \in O_r} \left( ||e_j - m_r||^2 + ||m_r - c_r||^2 \right) \\
& = & \opt_k(\bO) + \sum_{r=1}^k m \cdot ||m_r - c_r||^2 = \opt_k(\bO) + m \sum_{r=1}^k  ||v_r||^2
\end{eqnarray*}
By our assumption, $\cost_C(\bO) \leq (1+\eps) \opt_k(\bO)$. Therefore,
$$ \sum_{r=1}^k ||v_r||^2 \leq \frac{\eps}{m} \cdot \opt_k(\bO) \stackrel{(\ref{eq:count1})}{=} \frac{\eps}{m} \cdot k (m-1) \leq \frac{k}{m(m-1)}.$$
\qed
\end{proof}

Now define a corresponding assignment function $f: X \rightarrow \{1, \ldots, k\}$ as follows: $f(e_j) = r$ if $e_j \in O_r$.  Let
$\bO'=\{O_1', \ldots, O_k'\}$ be another clustering in $\ctd$ which satisfies condition~(\ref{eq:lower1}). Define vectors $v_r'$ and the assignment function
$f'$ in a similar manner. The following lemma shows that $f$ and $f'$ cannot differ in too many coordinates.
\begin{lemma}
\label{lem:lower2}
Let $D$ denote the set of indices $j$ for which $f(e_j) \neq f'(e_j)$. Then $|D| \leq d/2$.
\end{lemma}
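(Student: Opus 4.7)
The plan is to bound $|D|$ by summing, over $j \in D$, a lower bound on the contribution of coordinate $j$ to the quantity $\sum_{r=1}^{k} (\|v_r\|^2 + \|v'_r\|^2)$, and then comparing this to the upper bound on that same quantity delivered by Lemma~\ref{lem:lower1} applied twice (once to $\bO$ and once to $\bO'$, both of which are assumed to satisfy~(\ref{eq:lower1}) for the fixed set $C$).

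The key coordinate-wise observation is as follows. Fix $j \in D$ and let $r = f(e_j)$, $r' = f'(e_j)$, so that $r \neq r'$. From the definition of $v_r$ and $v'_r$ we have $(v_r)_j = (c_r)_j - 1/m$ (because $e_j \in O_r$) while $(v'_r)_j = (c_r)_j$ (because $e_j \notin O'_r$). Since these two numbers differ by $1/m$, the elementary inequality $a^2 + b^2 \ge (a-b)^2/2$ gives
\[
(v_r)_j^{\,2} + (v'_r)_j^{\,2} \ \ge\ \frac{1}{2m^2}.
\]
Symmetrically, $(v_{r'})_j = (c_{r'})_j$ and $(v'_{r'})_j = (c_{r'})_j - 1/m$ yield $(v_{r'})_j^{\,2} + (v'_{r'})_j^{\,2} \ge \frac{1}{2m^2}$. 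Crucially, these two contributions are attached to distinct values of the outer summation index ($r$ and $r'$), so they do not overlap in the sum $\sum_s (\|v_s\|^2 + \|v'_s\|^2)$. Discarding all nonnegative terms from other coordinates and from indices $j \notin D$, we obtain
\[
\sum_{r=1}^{k} \bigl(\|v_r\|^2 + \|v'_r\|^2\bigr) \ \ge\ \sum_{j \in D} \frac{1}{m^2} \ =\ \frac{|D|}{m^2}.
\]

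To finish, I would apply Lemma~\ref{lem:lower1} to each of $\bO, \bO'$ (valid since both satisfy~(\ref{eq:lower1})) to get $\sum_{r=1}^{k} (\|v_r\|^2 + \|v'_r\|^2) \le \frac{2k}{m(m-1)}$. Combining,
\[
|D| \ \le\ \frac{2km^{2}}{m(m-1)} \ =\ \frac{2d}{m-1},
\]
using $d = km$. Since $m = \lceil 1/\sqrt{\veps}\rceil$ and $\veps$ is taken small enough, $m - 1 \ge 4$ and hence $|D| \le d/2$, as required.

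There is no real obstacle here — the argument is a one-shot coordinate accounting. The only place to be careful is noticing that, for each $j \in D$, coordinate $j$ contributes to \emph{two} different vector norms (one indexed by $r$, one by $r'$), so the factor $1/m^2$ (not merely $1/(2m^2)$) can be harvested per element of $D$; and that contributions from different $j \in D$ live on different coordinates, so they simply add. The same calculation will later be invoked (in the still-forthcoming counting step of Theorem~\ref{thm:lowerbound}) to say that any one list element $C$ is ``good'' for only a small fraction of clusterings in $\ctd$, yielding the claimed $2^{\tilde\Omega(k/\sqrt{\veps})}$ lower bound via~(\ref{eq:count}).
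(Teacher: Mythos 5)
Your proof is correct, but it takes a genuinely different (and cleaner) route than the paper. The paper argues by contradiction: assuming $|D|>d/2$, it writes $\|v_r'\|^2 \geq \sum_{j\in D_r}\bigl((v_r)_j \pm \tfrac1m\bigr)^2$ over the symmetric differences $D_r$, expands the square, and controls the resulting linear cross-term $\tfrac2m\sum_r\sum_{j\in D_r}|(v_r)_j|$ via Cauchy--Schwarz together with Lemma~\ref{lem:lower1}, arriving at $\sum_r\|v_r'\|^2 > \tfrac{k}{m(m-1)}$, contradicting Lemma~\ref{lem:lower1} for $\bO'$. You instead work with the symmetric quantity $\sum_r(\|v_r\|^2+\|v_r'\|^2)$ and harvest, for each $j\in D$, a contribution of $1/m^2$ from the two distinct indices $f(e_j)\neq f'(e_j)$ via the elementary inequality $a^2+b^2\geq (a-b)^2/2$; this eliminates the cross-term and the Cauchy--Schwarz step entirely and yields a direct bound $|D|\leq 2d/(m-1)$, which is in fact stronger than $d/2$ for small $\veps$ (it is $O(d\sqrt{\veps})$ rather than a constant fraction of $d$) and could in principle be used to tighten the counting in Corollary~\ref{cor:lower}. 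Both arguments use Lemma~\ref{lem:lower1} for each of the two clusterings and both need $m$ to exceed a small constant, which is guaranteed by the standing assumption that $\veps$ is small enough; your bookkeeping that the four relevant squared coordinates for a given $j$ are distinct terms of the double sum, and that distinct $j$ contribute to distinct coordinates, is exactly the point that makes the accounting valid.
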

\begin{proof}
Assume for the sake of contradiction that $|D| > d/2$.
For cluster $O_r$, let $D_r$ denote the set of indices $j$ such that $e_j \in O_r \triangle O_r'$. Observe that $\coord{v_r}{j}$ and
$\coord{v_r'}{j}$ differ (in absolute value) by $1/m$. Therefore,
$$ ||v_r'||^2 = \sum_{j \in D_r} \left( \coord{v_r}{j} \pm \frac{1}{m} \right)^2
\geq \frac{|D_r|}{m^2} - \frac{2}{m} \sum_{j \in D_r} |\coord{v_r}{j}| .$$
Summing over $r = 1, \ldots, k$, we get
$$\sum_{r=1}^k ||v_r'||^2 \geq \frac{2|D|}{m^2} - \frac{2}{m} \sum_{r=1}^k \sum_{j \in D_r}  |\coord{v_r}{j}|
\geq \frac{d}{m^2} - \frac{2}{m} \cdot \sqrt{2d} \cdot \sqrt{\sum_{r=1}^k \sum_{j \in D_r}  |\coord{v_r}{j}|^2},$$
where the last inequality follows from Cauchy-Schwarz, and the observation that $\sum_r |D_r| = 2|D| > d$. Using Lemma~\ref{lem:lower1},
we see that
$$\sum_{r=1}^k ||v_r'||^2  \geq \frac{k}{m} - \frac{2}{m} \cdot \sqrt{2km} \cdot \sqrt{\sum_{r=1}^k ||v_r||^2} \geq
 \frac{k}{m} -  \frac{4k}{m \sqrt{m-1}} > \frac{k}{m(m-1)},$$
 assuming $m$ is a large enough constant. But this contradicts Lemma~\ref{lem:lower1}.
 \qed
\end{proof}

The above lemma shows that the number of clusterings in $\ctd$ satisfying condition~(\ref{eq:lower1}) is  small.
\begin{corollary}
\label{cor:lower}
The number of clusterings in $\ctd$ satisfying condition~(\ref{eq:lower1}) is at most $\binom{km}{km/2} \cdot \frac{(km/2)!}{((m/2)!)^k}$.
\end{corollary}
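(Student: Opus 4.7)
The plan is to treat Lemma~\ref{lem:lower2} as the key structural input: it tells us the assignment functions of any two clusterings in $\ctd$ satisfying~(\ref{eq:lower1}) agree on more than $km/2$ of the $d=km$ coordinates. If no clustering in $\ctd$ satisfies~(\ref{eq:lower1}) the bound is trivially $0$, so assume at least one such clustering exists and fix one, call it $\bO_0$, with assignment function $f_0$. Every other valid clustering $\bO$ then has an assignment function $f$ within Hamming distance at most $km/2$ of $f_0$. It therefore suffices to bound the number of balanced assignment functions $f : \{1,\ldots,km\} \to \{1,\ldots,k\}$ (with $|f^{-1}(r)| = m$ for each $r$) that lie within Hamming distance $km/2$ of $f_0$.

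I will count such $f$ by an injective encoding. For each valid $f$, let $D'_f := \{j : f(j) \neq f_0(j)\}$ and choose, canonically (say, lexicographically smallest), a superset $D \supseteq D'_f$ of size exactly $km/2$; such a $D$ exists since $|D'_f| \le km/2$. Map $f$ to the pair $(D, f|_D)$. This map is injective because $f$ agrees with $f_0$ on $D^c$, so $(D, f|_D)$ recovers $f$ completely.

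Next I will count how many pairs $(D, f|_D)$ can arise. There are $\binom{km}{km/2}$ possible $D$. For a fixed $D$, the function $f|_D : D \to \{1,\ldots,k\}$ is constrained by the balance requirement on $f$: since $f|_{D^c} = f_0|_{D^c}$, writing $a_r := |f_0^{-1}(r) \cap D|$, the condition $|f^{-1}(r)|=m$ forces $|f^{-1}(r) \cap D| = m - (m - a_r) = a_r$. Hence $f|_D$ must be a rearrangement of $f_0|_D$, and the number of such rearrangements equals the multinomial coefficient $\frac{(km/2)!}{\prod_{r=1}^k a_r!}$. Since the factorial is log-convex (equivalently, the multinomial coefficient with a fixed sum is maximized when its parts are equal), and $\sum_r a_r = km/2$ with $0 \le a_r \le m$, this count is at most $\frac{(km/2)!}{((m/2)!)^k}$. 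Multiplying the two counts gives the claimed bound.

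The only delicate step is verifying that balance forces $f|_D$ to be a rearrangement of $f_0|_D$; this is a one-line bookkeeping argument using $f|_{D^c} = f_0|_{D^c}$. Everything else is a straightforward encoding argument plus the multinomial maximization, so there is no significant obstacle — the substantive work has already been done in Lemma~\ref{lem:lower2}.
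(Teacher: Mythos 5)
Your proof is correct and follows essentially the same route as the paper's: fix one satisfying clustering, invoke Lemma~\ref{lem:lower2} to bound the Hamming distance between assignment functions, choose the coordinate set, and count the balance-constrained rearrangements via a multinomial coefficient maximized at equal parts. Your padding of the difference set to a canonical superset of size exactly $km/2$ is in fact a small tightening of the paper's argument, which informally bounds the number of difference sets of size at most $km/2$ by $\binom{km}{km/2}$.
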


\begin{proof}
Fix a clustering $\bO=\{O_1, \ldots, O_r\}$
 satisfying condition~(\ref{eq:lower1}), and let $f$ be the corresponding assignment function. How many assignment
functions (corresponding to a clustering in $\ctd$) can differ from $f$ in at most $d/2$ coordinates ? There are at most $\binom{km}{km/2}$
ways of choosing the coordinates in which the two functions differ. Consider a fixed choice of such coordinates, and say there are $d_r$ coordinates
corresponding to points in $O_r$. Let $d'$ denote $\sum_r d_r$ (and so, $d' \leq d/2$). Now, we need to partition these coordinates
into sets of size $d_1, \ldots, d_k$ (note that $f'$ corresponds to a clustering where all clusters are of equal size). The number of possibilities here is
$\frac{d'!}{d_1! \ldots d_k!}$, which is at most $\frac{(d/2)!}{(d/2k)!)^k}.$
\qed
\end{proof}

Recall that we want $\L$ to contain enough elements such that for at least half of the clusterings in $\ctd$, condition~(\ref{eq:lower1})  is
satisfied with respect to some set of centers in $\L$. Therefore, Corollary~\ref{cor:lower} and (\ref{eq:count}) imply that
$$ |\L| \geq \frac{\frac{(km)!}{(m!)^k}}{\binom{km}{km/2} \cdot \frac{(km/2)!}{((m/2)!)^k}} = 2^{\tilde{\Omega}(km)} = 2^{\tilde{\Omega}(k/\sqrt{\veps})}.$$

This concludes the proof of Theorem~\ref{thm:lowerbound}.

\section{Extension to the list $k$-median problem}
\label{sec:ext}

The setting for the list $k$-median problem is same as that for the list $k$-means problem, except for the fact that distances are measured using the Euclidean norm (instead of the square of the Euclidean norm).
As before, for a set $C$ of  $k$ centers, and a clustering $\bO=\{O_1, \ldots, O_k\}$ of a set of points $X$, define $\cost_C(\bO)$ as the minimum, over all permutations $\pi$ of $C$, of $\sum_{i=1}^k \sum_{x \in O_i} ||x-c_{\pi(i)}||.$
Define  $\opt_k(\bO)$, $\Phi_C(X)$  analogously.
For a set of points $X$, let $\Delta(X)$ denote the optimal $1$-median cost of $X$, i.e., $\min_{c \in {\mathbb{R}^d}} \sum_{x \in X} ||x-c||$.
We no longer have an analogue of Fact~\ref{lem:folklore} -- for a set of points $X$, if $c^\star$ denotes the optimal center with respect to the
$1$-median objective, and  $c$ is a point such that $\Phi_c(X) \leq (1+\eps) \cdot \Phi_{c^\star}(X)$, it is possible that $||c-c^\star||$ is large.
This also implies that there is no analogue of the Lemma~\ref{lemma:inaba}.
However, instead of the approximate triangle inequality~(Fact~\ref{lem:triangle}), we get triangle inequality in the Euclidean metric.

We shall use a result of Kumar  et al.~\cite{kss}, which gives an alternative to Lemma~\ref{lemma:inaba}, although it outputs several candidate centers instead of just the mean of a random sample.
\begin{lemma}[Theorem 5.4~\cite{kss}]
\label{lem:kss}
Given a random sample (with replacement) $R$ of size $\frac{1}{\eps^4}$ from a set of points $X \in {\mathbb{R}}^d$, there is a procedure
$\const(R)$, which outputs a set $\core(R)$ of size $2^{\left(1/\eps\right)^{O(1)}}$ such that the following event happens with probability at least
$1/2$~: there is at least one point $c \in \core(R)$ such that $\Phi_c(X) \leq (1+\eps) \cdot \Delta(X)$.
The time taken by the procedure $\const(R)$ is $O \left(2^{\left(1/\eps\right)^{O(1)}} \cdot d \right)$.
\end{lemma}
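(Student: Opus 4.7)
The plan is to prove Lemma~\ref{lem:kss} by combining a sampling-based localization of the optimal 1-median center with a grid search restricted to the low-dimensional affine span of $R$. Let $c^\star$ be the optimal 1-median center of $X$, with cost $\Delta(X)$, and set $\bar r := \Delta(X)/|X|$, the average per-point distance to $c^\star$. By Markov's inequality, at least $3|X|/4$ of the points of $X$ lie in the ``good'' set $X_{\text{good}} = \{x \in X : ||x - c^\star|| \leq 4 \bar r\}$. Since $|R| = 1/\eps^4$, a Chernoff bound guarantees that, with constant probability, a constant fraction of $R$ falls inside $X_{\text{good}}$; in particular $R$ contains at least one point $y$ with $||y - c^\star|| \leq 4 \bar r$, and empirical pairwise distances within $R$ yield an estimate $\hat r$ satisfying $\bar r \leq \hat r \leq O(\bar r)$. (Failing this, I would simply enumerate $O(\log(1/\eps))$ dyadic candidate scales as an outer loop, which only blows up the candidate count by a polynomial factor.)

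Next, I would restrict the search for the center to the affine span $V$ of the sample, whose dimension is at most $|R| = 1/\eps^4$. The technical heart of the argument is the claim that projecting $c^\star$ onto $V$ does not increase the 1-median cost by more than a $(1+\eps/2)$ factor: setting $c' := \Pi_V(c^\star)$, I would aim to show that $\Phi_{c'}(X) \leq (1+\eps/2)\, \Delta(X)$. To this end, I would decompose $x - c^\star = u_x + w_x$ with $u_x \in V$ and $w_x \in V^{\perp}$, and use that every point of $R$ lies inside $V$, so that the sample certifies that the ``mass'' of $X - c^\star$ in directions orthogonal to $V$ is small. This projection bound is the main obstacle, since, unlike the squared-Euclidean case, the 1-median objective has no analogue of Fact~\ref{lem:folklore} that would make such a decomposition clean; the argument must instead combine the Euclidean triangle inequality with a matrix-concentration-flavored bound on the orthogonal residuals of the sample.

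Finally, I would grid-search inside $V$. Placing a grid of spacing $\eps \bar r / \sqrt{\dim(V)}$ within a ball of radius $O(\bar r)$ centered at any $y \in R$ with $||y - c^\star|| \leq 4 \bar r$, restricted to $V$, produces at most $(\sqrt{\dim(V)}/\eps)^{O(\dim(V))} = 2^{(1/\eps)^{O(1)}}$ candidate centers. For the grid point $g$ closest to $c'$, one has $||g - c'|| \leq \eps \bar r$, hence by the Euclidean triangle inequality applied pointwise, $\Phi_g(X) \leq \Phi_{c'}(X) + |X| \cdot ||g - c'|| \leq (1+\eps)\, \Delta(X)$. Taking $\core(R)$ to be the union of these grid points (over the $O(\log(1/\eps))$ candidate scales, if needed) gives a set of the required size $2^{(1/\eps)^{O(1)}}$; the runtime bound follows since each grid point can be written down in time $O(d \cdot \dim(V))$, and the sample is read once.
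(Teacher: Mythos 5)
The paper does not prove this lemma at all: it is imported verbatim as Theorem~5.4 of Kumar, Sabharwal and Sen \cite{kss} (whose 1-median candidate-center construction in turn follows B\u{a}doiu--Har-Peled--Indyk). Your overall architecture --- localize the scale $\bar r = \Delta(X)/|X|$, restrict the search to the affine span $V$ of the sample, grid-search inside $V$ at resolution $\eps\bar r/\sqrt{\dim V}$ --- is indeed the architecture of that original proof, and your counting of grid points, $(\sqrt{\dim V}/\eps)^{O(\dim V)} = 2^{(1/\eps)^{O(1)}}$ for $\dim V \le 1/\eps^4$, and the final triangle-inequality step $\Phi_g(X) \le \Phi_{c'}(X) + |X|\cdot\|g-c'\|$ are both fine.

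The genuine gap is exactly the step you flag as ``the main obstacle'': the claim that $c' = \Pi_V(c^\star)$ satisfies $\Phi_{c'}(X) \le (1+\eps/2)\Delta(X)$, equivalently that $\mathrm{aff}(R)$ passes within distance $O(\eps\bar r)$ of $c^\star$ with constant probability. This is the entire content of the theorem, and the route you sketch toward it does not work: the orthogonal residuals of the sample points relative to $V$ are identically zero by construction, so the sample ``certifies'' nothing about the orthogonal mass of $X - c^\star$; and since the objective is a sum of unsquared norms there is no variance decomposition (no analogue of Fact~\ref{lem:folklore}) for a matrix-concentration argument to bite on --- indeed the paper explicitly invokes Lemma~\ref{lem:kss} precisely \emph{because} such decompositions fail for the median. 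The actual argument (B\u{a}doiu--Har-Peled--Indyk, Lemma~2.3--2.4) is iterative, not one-shot: one shows that as long as the flat spanned by the samples seen so far is farther than $\eps\bar r$ from $c^\star$, a constant fraction of $X$ (by mass) consists of points whose inclusion shrinks $\mathrm{dist}(c^\star, \mathrm{flat})$ by a factor $(1-\mathrm{poly}(\eps))$, and then one runs $O(\eps^{-3}\log(1/\eps))$ rounds. Without that inductive potential argument your proof is incomplete. A secondary, more fixable gap: the scale estimate $\hat r$ cannot be recovered from ``$O(\log(1/\eps))$ dyadic candidate scales'' as a fallback, since a priori the relevant scale ranges over $\log(\mathrm{diam}/\mathrm{closest\ pair})$ many dyadic levels, not $\log(1/\eps)$; the standard fix is to show that, with constant probability, some pairwise distance within $R$ brackets $\bar r$ to within a $\poly(1/\eps)$ factor and to enumerate only the $O(\log(1/\eps))$ scales around each such pair.
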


Now we explain the changes needed in the algorithm and the analysis.
Given a set of points $X$ and another set of points $C$, $D$-sampling from $X$ w.r.t. $C$ samples a point $x \in X$ with probability proportional to $\Phi_C(x)$, i.e., $\min_{c \in C} ||c-x||$.

\subsection{The algorithm}
The algorithm is the same as that in Figure~\ref{fig:k}, except for some minor changes in the procedure {\bf Sample-Centers}, 
and changes in the values of the various parameters. The parameters $\alpha$ and $\beta$ in the procedure {\bf List-$k$-median}
are large enough constants. We briefly describe the changes in the procedure {\bf Sample-Centers}.
In Step~2(a), we sample the multi-set $S$ using $D$-sampling w.r.t $C$.
We replace Step~2(d) by the following: for all subsets $T \subset S'$ of size $M$, and for all elements $c \in  \core(T)$ (i) $C \leftarrow C \cup \{c\}$, (ii) {\bf Sample-centers$(X, k, \eps, i+1, C)$}.
Recall that $\core(T)$ is the set guaranteed by Lemma~\ref{lem:kss}.
In other words, unlike for the $k$-means setting, where we could just work with the mean of $T$, we now need to try out all the elements in $\core(T)$.
Figure~\ref{fig:k-median}, gives a detailed description of the algorithm.

\begin{center}
\begin{Algorithm}
\begin{boxedminipage}{\textwidth}
{\bf List-$k$-median($X, k, \veps$)}

\hspace{0.1in} - Let $N = \frac{\alpha \cdot k}{\eps^6}$, $M = \frac{\beta}{\veps^4}$, $\L \leftarrow \emptyset$.

\hspace{0.1in} - Repeat $2^k$ times:

\hspace{0.3in} - Make a call to {\bf Sample-centers$(X, k, \eps, 0, \{\})$} and output the union of lists returned by these calls.

\hspace{0.1in} - Return $\L$.

\vspace{0.1in}

{\bf Sample-centers$(X, k, \eps, i, C)$} \\
\sp \sp \sp \sp \sp \sp \sp \sp (1) If $(i = k)$ then add $C$ to $\L$. \\
\sp \sp \sp \sp \sp \sp \sp \sp (2) else \\
\sp \sp \sp \sp \sp \sp \sp \sp \sp \sp \sp \sp (a) Sample a multiset $S$ of $N$ points with $D$-sampling (w.r.t. centers $C$) \\
\sp \sp \sp \sp \sp \sp \sp \sp \sp \sp \sp \sp  (b) $S' \leftarrow S$ \\
 \sp \sp \sp \sp \sp \sp \sp \sp \sp \sp \sp \sp  (c) For all $c \in C$: $S' \leftarrow S' \cup \{\textrm{$M$ copies of $c$}\}$ \\
 \sp \sp \sp \sp \sp \sp \sp \sp \sp \sp \sp \sp  (d) For all subsets $T \subset S'$ of size $M$ and for all elements $c \in \core(T)$: \\
 \sp \sp \sp \sp \sp \sp \sp \sp \sp \sp \sp \sp   \sp \sp \sp \sp (i) $C \leftarrow C \cup \{c\}$. \\
\sp \sp \sp \sp \sp \sp \sp \sp \sp \sp \sp \sp   \sp \sp \sp \sp  (ii) {\bf Sample-centers$(X, k, \eps, i+1, C)$}
\end{boxedminipage}
\caption{Algorithm for list $k$-median.}
\label{fig:k-median}
\end{Algorithm}
\end{center}
\vspace{-0.3in}

\subsection{Analysis}
The analysis proceeds along the same lines as in Section~\ref{sec:analysis}, and we would again like to prove the induction hypothesis
$P(i)$.
We use the same notation as in Section~\ref{sec:analysis}, and define Cases I and II analogously.
Consider Case I first.
Proof of Lemma~\ref{lem:inter0} remains unchanged.
The set $\ostp{\ti}$ is defined similarly.
Let $m^\star$ be the point for which $\Delta(\ost{\ti})= \Phi_m(\ost{\ti}).$
Define $m'$ analogously for the set $\ostp{\ti}$.
The statement of Lemma~\ref{lem:inter1} now changes as follows:
 \begin{eqnarray}
 \notag
 \Delta(\ostp{\ti}) & \leq & \sum_{p \in \ost{\ti}} || c(p) - m'|| \ \leq \  \sum_{p \in \ost{\ti}} || c(p) - m^\star|| \ \leq \ \sum_{p \in \ost{\ti}} \left( || c(p) - p|| + ||p-m^\star|| \right) \\
 \label{eq:inter1}
& = & \Phi_{C_i}(\ost{\ti}) + \Delta(\ost{\ti})
 \end{eqnarray}

\noindent
Proof of Lemma~\ref{lem:inter2} also changes as follows: let $m''$ be as in the statement of this lemma. Then,
 \begin{eqnarray*}
\Phi_{m''}(\ost{\ti}) & = & \sum_{p \in \ost{\ti}} ||p-m''|| \ \leq \  \sum_{p \in \ost{\ti}} \left( ||p-c(p)|| + ||c(p)-m''|| \right) \\
& = & \Phi_{C_i}(\ost{\ti}) + \Phi_{m''}(\ostp{\ti}) \ \leq \ \Phi_{C_i}(\ost{\ti}) + \left(1 + \frac{\eps}{8} \right) \cdot \Delta(\ostp{\ti}) \\
&  \stackrel{\mbox{\small{(\ref{eq:inter1})}}}{\leq} & 2 \cdot \Phi_{C_i}(\ost{\ti}) + \left(1 + \frac{\eps}{8} \right) \cdot \Delta(\ost{\ti})
\ \stackrel{\mbox{\small{Lemma~\ref{lem:inter0}}}}{\leq} \ \frac{\eps}{3k} \cdot \opt_k(\bOst) + \left(1 + \frac{\eps}{8} \right) \cdot \Delta(\ost{\ti})
 \end{eqnarray*}

\noindent
Rest of the arguments remain unchanged (we use Lemma~\ref{lem:kss} instead of Lemma~\ref{lemma:inaba}).
Now we consider Case~II.  We redefine the parameter $R$ as
$$ R = \frac{\eps}{9} \cdot \frac{\Phi_{C_i}(\ost{\ti})}{|\ost{\ti}|}.$$
 Define sets $\ostp{\ti}, \ostpn{\ti}, \ostf{\ti}$ as before.
 Let $m^\star$ be the point for which $\Delta(\ost{\ti})= \Phi_{m^\star}(\ost{\ti})$, and $m'$ be the analogous point for $\ostp{\ti}$.
 Proof of Lemma~\ref{lem:IIinter0} can be easily modified to yield the following~(instead of Fact~\ref{lem:folklore}, we just need
to use triangle inequality)~:

\begin{eqnarray}
\label{eq:II0-km}
 \Delta(\ost{\ti}) = \Phi_{m^\star}(\ost{\ti}) \geq \frac{4 n}{\eps} \cdot R
\end{eqnarray}

\noindent
We have the following version of Lemma~\ref{lem:IIinter1}:
\begin{eqnarray}
\notag
\Delta(\ostp{\ti}) & \leq & \Phi_{m^\star}(\ostp{\ti}) \ = \ \sum_{p \in \ostn{\ti}} ||c(p)-m^\star||  + \sum_{p \in \ostf{\ti}} ||p-m^\star|| \\
\notag
& \leq & \sum_{p \in \ostn{\ti}} \left( ||p-m^\star||+ ||c(p)-p|| \right)  + \sum_{p \in \ostf{\ti}} ||p-m^\star|| \\
\label{eq:II-km}
& \leq & nR + \Delta(\ost{\ti}),
\end{eqnarray}
where $n$ denotes $|\ost{\ti}|$.
Finally, let $m''$ be as in the statement of Lemma~\ref{lem:II-final}. Then,
\begin{eqnarray}
\notag
\Phi_{m''}(\ost{\ti}) &=& \sum_{p \in \ostn{\ti}} ||p-m''||  + \sum_{p \in \ostf{\ti}} ||p-m''|| \\
\notag & \leq & \sum_{p \in \ostn{\ti}} \left( ||c(p)-m''|| + ||c(p)-p|| \right) + \sum_{p \in \ostf{\ti}} ||p-m''|| \\
\notag & \leq & nR + \Phi_{m''}(\ostp{\ti}) \ \leq \ nR + \left(1 + \frac{\eps}{8} \right) \cdot \Delta(\ostp{\ti}) \\
& \stackrel{(~\ref{eq:II-km})}{\leq} & 3nR + \left(1 + \frac{\eps}{8} \right) \cdot \Delta(\ost{\ti})
\  \stackrel{(~\ref{eq:II0-km})}{\leq} \ (1+\eps) \cdot \Delta(\ost{\ti}).
\end{eqnarray}
Rest of the arguments go through without any changes.

\section{Conclusion}
\label{sec:concl}
We formulated the list $k$-means problem and gave nearly tight upper and lower bounds on the size of the list of candidate centers. We also obtained
an algorithm for the  constrained $k$-means problem getting a significant improvement over the previous results of Ding and Xu~\cite{dx15}.
Furthermore, we show how our techniques generalize for the corresponding $k$-median problems.
We would also like to point out that our techniques generalize for settings that involve non-Euclidean distance measures.
After going through the analysis of our algorithm, it is not difficult to show that the only properties that are used in the analysis  are:
\begin{itemize}
\item[(i)] Symmetry of the distance measure (used implicitly)
\item[(ii)] (Approximate) Triangle Inequality: Fact~\ref{lem:triangle}
\item[(iii)] Centroid property: Fact~\ref{lem:folklore}
\item[(iv)] Sampling property: Lemma~\ref{lemma:inaba}
\end{itemize}
The analysis holds even for some approximate versions of the above properties. 
For instance, for the $k$-median problem we were able to use Lemma~\ref{lem:kss} instead of Lemma~\ref{lemma:inaba} (i.e., the sampling property).
Also, we were able to work without the centroid property since for the $k$-median problem the distances follow the exact triangle inequality instead of the approximate version (i.e., Fact~\ref{lem:triangle}).
We note that there are a number of clustering problems in machine learning that are modeled as $k$-median problem over distance measures that follow the above properties in some approximate sense.
Mahalanobis distance and $\mu$-similar Bregman divergence are two examples of such distance measures.
Our results can be very easily extended for the $k$-median problem over such distance measures \footnote{Please see \cite{jks} for a discussion on such distance measures. This work shows how to extend such $D^2$-sampling based analysis to settings involving such distance measures.}.

}

\bibliographystyle{alpha}
\bibliography{paper}

\end{document}